\newtheorem{remark}{Remark}
\newtheorem{theorem}{Theorem}
\newtheorem{proof}{Proof}
\begin{document}

\title{Convergence Rate Analysis for Periodic Gossip Algorithms in Wireless
Sensor Networks}
\author{S. Kouachi, Sateeshkrishna Dhuli, and Y. N. Singh~%
\IEEEmembership{Senior Member,~IEEE} \thanks{%
S. Kouachi is with the Department of Mathematics, College of Science, Qassim
University, P. O. Box 6644, Al-Gassim, Buraydah 51452, Saudi Arabia e-mail:
koashy@qu.edu.sa}\thanks{%
Sateeshkrishna Dhuli is with the Department of Electrical Engineering,
Indian Institute of Technology,Kanpur, 2018016, Uttar pradesh, India e-mail:
dvskrishna.nitw@gmail.com}\thanks{%
Y.N.Singh is with the Department of Electrical Engineering, Indian Institute
of Technology,Kanpur, 2018016, Uttar pradesh, India e-mail:
ynsingh@iitk.ac.in}
}
\maketitle

\begin{abstract}
Periodic gossip algorithms have generated a lot of interest due to their ability to compute the global statistics by using local pairwise communications among nodes. Simple execution, robustness to topology changes, and distributed nature make these algorithms quite suitable for wireless sensor networks (WSN). However, these algorithms converge to the global statistics after certain rounds of pair-wise communications. A significant challenge for periodic gossip algorithms is difficult to predict the convergence rate for large-scale networks. To facilitate the convergence rate evaluation, we study a one-dimensional lattice network model.  In this scenario, to derive the explicit formula for convergence rate, we have to obtain a closed form expression for second largest eigenvalue of perturbed pentadiagonal matrices. In our approach, we derive the explicit expressions of eigenvalues by exploiting the theory of recurrent sequences. Unlike the existing methods in the literature, this is a direct method which avoids the theory of orthogonal polynomials \cite{hadj}. Finally, we derive the explicit expressions for convergence rate of the average periodic gossip algorithm in one-dimensional WSNs. We analyze the convergence rate by considering the linear weight updating approach and investigate the impact of gossip weights on the convergence rates for the different number of nodes. Further, we also study the effect of link failures on the convergence rate for average periodic gossip algorithms.
\end{abstract}

{} 

\begin{IEEEkeywords}
Wireless Sensor Networks, Pentadiagonal Matrices, Eigenvalues, Lattice Networks, Convergence Rate, Gossip Algorithms, Distributed Algorithms, Periodic Gossip Algorithms, Perturbed Pentadiagonal Matrices
\end{IEEEkeywords}

\IEEEpeerreviewmaketitle

\section{Introduction}

\IEEEPARstart{G}{ossip}  is a distributed operation which enables the sensor nodes to asymptotically to determine the average of their initial gossip variables. Gossip algorithms (\cite{boyd}, \cite{liu}, \cite{falsone}, \cite{mou}, \cite{anderson}, \cite{he}, \cite{zanaj}, \cite{dimakis}) have generated a lot of attention in the last decade due to their ability to achieve the global average using pairwise communications between nodes. In contrast to Centralized algorithms, the underlying distributed philosophy of these algorithms avoids the need for a fusion center for information gathering. Especially, they are quite suitable for data delivery in WSNs(\cite{mou}, \cite{anderson}, \cite{he}) as they can be utilized when the global network topology is highly dynamic, and network consists of power constrained nodes. As the gossip algorithms are iterative in nature, the convergence rate of the
algorithms greatly influences the performance of the WSNs. Although there have been several studies on gossip algorithms, analytic tools to control the convergence rate have not been much explored in the literature. \\
In a gossip algorithm, each node communicates information with one of the neighbors to obtain the global average at every node. Gossip algorithms have been shown to have faster convergence rates with the use of periodic gossip sequences, and such algorithms are termed as periodic gossip algorithms (\cite{zanaj}, \cite{dimakis}). Convergence rate of a periodic gossip algorithm is characterized by the magnitude of the second largest eigenvalue of a gossip matrix \cite{mou}. However, computing the second largest eigenvalue requires huge computational resources for large-scale networks. In our work, we estimate the convergence rate of the periodic gossip algorithms for one-dimensional Lattice network. Lattice networks represent the notion of geographical proximity in the practical WSNs, and they have been extensively used in the WSN applications for measuring and monitoring purposes \cite{wireless}.  Lattice networks (\cite{lattice}, \cite{el2}, \cite{spirakis}, \cite{li}, \cite{hekmat}, \cite{dousse}) are amenable to closed-form solutions which can be generalized to higher dimensions. These structures also play a fundamental role to analyze the connectivity, scalability, network size, and node failures in WSNs.  \\

In this paper, we model the WSN as a one-dimensional lattice network and obtain the explicit formulas of convergence rate for periodic gossip algorithms by considering both even and odd number of nodes. To obtain the convergence rate, we need to derive the explicit expressions of second largest eigenvalue for the perturbed pentadiagonal stochastic matrix. For properties of these matrices, we refer to (\cite{hadj}, \cite{salkuyeh}). In \cite{kilic}, the author considered a constant-diagonals matrix and gave many examples of the determinant and inverse of the matrix for some special cases. To the best of our knowledge, explicit expressions of eigenvalues for pentadiagonal matrices are not yet available in the literature. In our work, we derive the explicit expressions of eigenvalues for perturbed pentadiagonal matrices. Closed-form expressions of eigenvalues are extremely helpful as pentadiagonal have been widely used in the applications of time series analysis, signal processing, boundary value problems of partial differential equations, high order harmonic filtering theory, and differential equations (\cite{ravi}, \cite{capizzano}). To determine the eigenvalues, we obtain the recurrent relations followed by the application of the theory of recurrent sequences. Unlike the existing approaches in the literature, our approach avoids the theory of orthogonal polynomials. Furthermore, we use the explicit eigenvalue expressions of perturbed pentadiagonal matrices to derive the convergence rate expressions of periodic gossip algorithm in one-dimensional WSNs.  Specifically, our work avoids the usage of computationally expensive algorithms for studying the large-scale networks. Our results are more precise and they can be applied to most of the practical WSNs.
 
\subsection{Our Main Contributions}
(1)Firstly, we model the WSN as a one-dimensional lattice network and compute the gossip matrices of average periodic gossip algorithm for both even and odd number of nodes.
\newline
(2)To obtain the convergence rate of average periodic gossip algorithm, we need to determine the second largest eigenvalue of perturbed pentadiagonal matrices. By exploiting the theory of recurrent sequences, we derive the explicit expressions of eigenvalues for perturbed pentadiagonal matrices.
\newline
(3)We extend our results to periodic gossip algorithms with linear weight updating approach to obtain the generalized expression for convergence rate. 
\newline
(4)We consider the case of link failures and obtain the explicit expressions of convergence rate for average periodic gossip algorithms.
\newline
(5)Finally, we present the numerical results and study the effect of number of nodes and gossip weight on convergence rate. 

\subsection{Organization}

In summary, the paper is organized as follows. In Section II, we give the brief review of the periodic gossip algorithm. In Section III, we evaluate the primitive gossip matrices of periodic gossip algorithm for lattice networks. In Section IV, we derive the explicit eigenvalues of several perturbed pentadiagonal matrices using recurrent sequences. We derive the analytic expressions for convergence rate of periodic gossip algorithm for both even and odd number of nodes in Section V. Finally, in Section VI, we present the numerical results and study the effect of gossip weight and the number of nodes on the convergence rate.

\section{Brief Review of Periodic Gossip Algorithm}

Gossiping is a form of consensus to evaluate the global average of the
initial values of the gossip variables. The gossiping process can be modeled
as a discrete time linear system \cite{liu} as 
\begin{equation}
\textbf{x(t+1)}=M(t)\textbf{x(t)}, \quad t=1,2,..
\end{equation}
where $\textbf{x}$ is a vector of node variables, and $M(t)$ denotes a doubly stochastic
matrix. If nodes $i$ and $j$ gossip at time $t$, then the values of nodes at
time $(t+1)$ will be updated as 
\begin{equation}
x_{i}(t+1)=x_j(t+1)=\frac{x_i(t)+x_j(t)}{2}
\end{equation}
$M(t)$ is expressed as $M(t)$=$P_{i,j}$, where $P_{ij}$=$[P_{lm}]_{n%
\times n}$ for each step $(i,j)$ with entries defined as 
\begin{equation}
p_{lm}=\left\{%
\begin{matrix}
\frac{1}{2}, & (l,m)\in {(i,i),(i,j),(j,i),(j,j)} \\ 
1, & l=m,l\neq i,l \neq j ; \\ 
0, & otherwise.%
\end{matrix}%
\right.  \label{1}
\end{equation}
A gossip sequence is defined as an ordered sequence of edges for a given
graph in which each pair appears once. For a gossip sequence $%
(i_{1},j_{1}),(i_{2},j_{2}),.............(i_{k},j_{k})$, the gossip matrix
is expressed as $P_{i_{k}j_{k}}.......P_{i_{2}j_{2}}P_{i_{1}j_{1}}$. For a
periodic gossip sequence with period $T$, if $i_t,j_t$ denotes $t^{th}$ gossip
pair, then $i_{T+k}=i_{k}$ for $k=1,2,...$.\newline
Here, we can write variable $x$ at $(k+1)$ as 
\begin{equation}
\textbf{x((k+1)T)}=W\textbf{x(kT)},k=0,1,2...n,
\end{equation}
where \textit{W} is a doubly stochastic matrix. \textit{T} also denotes the number of steps needed to implement it's one period sub-sequence \textit{E}.\\
When a subset of edges are such that no two edges are adjacent to the same node and the gossips on these edges can be performed simultaneously in one time step is defined as multi-gossip. Minimum value of \textit{T} is related to an edge coloring problem. The minimum number of colors needed in an edge coloring problem is called as chromatic index. The value of the chromatic index is either $d_{max}$ or $d_{max}+1$, where $d_{max}$ is the maximum degree of a graph. When multi-gossip is allowed, a periodic gossip sequence $E,E,E$ with \textit{T}=chromatic index is called an optimal periodic gossip sequence. Convergence rate \cite{liu},\cite{mou} of the periodic gossip algorithm is characterized by the second largest eigenvalue $\left( {\lambda_2(W)} \right)$. Convergence rate (\textit{R}) at which gossip variable converges to a rank one matrix is determined by the spectral gap  $\left( {1-\lambda_2(W)} \right)$ \cite{boyd},\cite{dimakis}.

\begin{figure}[tbp]
\centering
\includegraphics[totalheight=1.2cm]{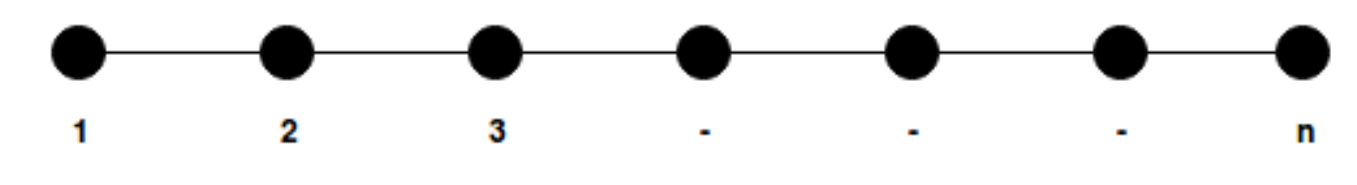}
\caption{One-Dimensional Lattice Network}
\label{fig:verticalcell}
\end{figure}

\section{Periodic Gossip Algorithm for a One-Dimensional Lattice Network}
We model the WSN as a one-dimensional Lattice network as shown in the Figure 
$1$. We obtain the optimal periodic sub-sequence and evaluate the primitive
gossip matrix.
\subsection{Average Gossip Algorithm}
In this algorithm, each pair of nodes at each
iteration participate in the gossip process to update with the average of their previous state values to obtain the global average. In this section, we study the average gossip algorithm for one-dimensional lattice network for both even and odd number of nodes.
\subsubsection{For $n$=even}
The possible pairs of one-dimensional lattice network can be expressed as
\begin{equation*}
\left \{
(1,2)(2,3)(3,4)...........(n-2,n-1),(n-1,n) \right \}
\end{equation*} 
In this case, the chromatic index is either $2$ or $3$. Hence, optimal periodic sub-sequence ($E$) can be written as 
\begin{equation*}
E=E_{1}E_{2},
\end{equation*}
where, $E_{1}=\left \{ (2,3) (4,5).....(n-2,n-1)\right \}$ 
and \\$%
E_{2}=\left
\{ (1,2)(3,4)(5,6)..........(n-1,n) \right \}$ are two disjoint sets. \newline
Primitive gossip matrix (\textit{W}) is expressed as \newline
$W=S_{1}S_{2}$ or $W=S_{2}S_{1}$, where \newline
$S_{1}=P_{2,3}P_{4,5}......P_{(n-2),(n-1)}$ \newline
$S_{2}=P_{1,2}P_{3,4}P_{5,6}....P_{(n-1),n}$ \newline
Hence, gossip matrix (\textit{W}) for $n$=even can be computed as 
\begin{equation*}
\resizebox{.8\hsize} {!} {$W=\left( \begin{array}{cccccc} \frac{1}{2} &
\frac{1}{4} & \frac{1}{4} & 0 & \cdots & 0 \\ \frac{1}{2} & \frac{1}{4} &
\ddots & 0 & \cdots & \vdots \\ 0 & \frac{1}{4} & \frac{1}{4} & \ddots &
\frac{1}{4} & \vdots \\ \vdots & \frac{1}{4} & \ddots & \frac{1}{4} &
\frac{1}{4} & 0 \\ \vdots & \cdots & 0 & \ddots & \frac{1}{4} & \frac{1}{2}
\\ 0 & \cdots & 0 & \frac{1}{4} & \frac{1}{4} &
\frac{1}{2}\end{array}\right) $}
\end{equation*}
\subsubsection{For $n$=odd}
In this case, optimal periodic sub-sequence (\textit{E}) is expressed as 
\begin{equation}
E=E_{1}E_{2},
\end{equation}
where, $E_{1}$=$\left \{ (2,3) (4,5).....(n-2,n-1)\right \}$ and \\
 $%
E_{2}$=$\left
\{ (1,2)(3,4)(5,6)..........(n-1,n) \right \}$ are two disjoint sets. \newline
primitive gossip matrix (\textit{W}) for $n$=odd is defined as \newline
$W=S_{1}S_{2}$ or $W=S_{2}S_{1}$, where \newline
$S_{1}=P_{1,2}P_{3,4}......P_{(n-2),(n-1)}$ \newline
$S_{2}=P_{2,3}P_{4,5}P_{6,7}....P_{(n-1),n}$ \newline \\
Hence, primitive gossip matrix (\textit{W}) for $n$=odd can be computed as 
\begin{equation*}
\resizebox{.8\hsize} {!} {$W=\left( \begin{array}{cccccc} \frac{1}{2} &
\frac{1}{4} & \frac{1}{4} & 0 & \cdots & 0 \\ \frac{1}{2} & \frac{1}{4} &
\ddots & 0 & \cdots & \vdots \\ 0 & \frac{1}{4} & \frac{1}{4} & \ddots &
\frac{1}{4} & \vdots \\ \vdots & \frac{1}{4} & \ddots & \frac{1}{4} &
\frac{1}{4} & 0 \\ \vdots & \cdots & 0 & \ddots & \frac{1}{4} & \frac{1}{4}
\\ 0 & \cdots & 0 & 0 & \frac{1}{2} & \frac{1}{2}\end{array}\right) $}
\end{equation*}
\subsection{Linear Weight Updating Approach}
In the previous section, we obtain the primitive gossip matrices for gossip weight 
\textit{w}=$\frac{1}{2}$. To investigate the effect of gossip weight on convergence
rate, we consider the special case by considering the weights associated
with the edges. If we assume that
at iteration \textit{k}, nodes \textit{i} and \textit{j} communicate, then node \textit{i} and node \textit{j} performs the
linear update with gossip weight \textit{w} as \cite{falsone}, \cite{mangoubi}.\newline
\begin{equation}
x_{i}(k)=(1-w)x_{i}(k-1)+w x_{j}(k-1)
\end{equation}
and 
\begin{equation}
x_{j}(k)=w x_{i}(k-1)+(1-w) x_{j}(k-1)
\end{equation}
, where \textit{w} is the gossip weight associated with edge $(i,j)$. We follow the similar steps as in the previous section to compute the primitive gossip matrix. \\ \\
For \textit{n}=even, primitive gossip matrix \textit{W} is expressed as

{\footnotesize 
\begin{equation}
\resizebox{1.02 \hsize} {!} {$W=\left( \begin{array}{cccccc} 1-w & -w^{2}+w
& w^{2} & 0 & \cdots & 0 \\ w & \left( w-1\right) ^{2} & \ddots & 0 & \cdots
& \vdots \\ 0 & -w^{2}+w & \left( w-1\right) ^{2} & \ddots & w^{2} & \vdots
\\ \vdots & w^{2} & \ddots & \left( w-1\right) ^{2} & -w^{2}+w & 0 \\ \vdots
& \cdots & 0 & \ddots & \left( w-1\right) ^{2} & w \\ 0 & \cdots & 0 & w^{2}
& -w^{2}+w & 1-w\end{array}\right) $}.  \label{2}
\end{equation}
}

Similarly, for $n$=odd, primitive gossip matrix (\textit{W}) can be computed as

{\footnotesize 
\begin{equation}
\resizebox{1.02 \hsize} {!} {$W=\left( \begin{array}{cccccc} 1-w & -w(w-1) &
w^{2} & 0 & \cdots & 0 \\ w & \left( w-1\right) ^{2} & \ddots & 0 & \cdots &
\vdots \\ 0 & -w(w-1)& \left( w-1\right) ^{2} & \ddots & w^{2} & \vdots \\
\vdots & w^{2} & \ddots & \left( w-1\right) ^{2} & -w^{2}+w & 0 \\ \vdots &
\cdots & 0 & \ddots & \left( w-1\right) ^{2} & -w(w-1)\\ 0 & \cdots & 0 & 0
& w & (1-w) \end{array} \right) $}  \label{3}
\end{equation}
}
\subsection{Effect of Link Failures on Convergence Rate}
Wireless sensor networks are prone to link failures due to noise, interference, and environmental changes. In this section, we study the effect of link failures on convergence rate for average periodic gossip algorithms. Let us consider the one-dimensional lattice network, where each link fails with the probability \textit{p}. \\
Primitive gossip matrix for even number of nodes is expressed as
\begin{equation}
\resizebox{1.03 \hsize} {!} {$P_{even}
=\left( \begin{array}{cccccccc} \frac{p+1}{2} & \frac{1-p^2}{4} & \frac{(1-p)^2}{4} & 0 & \ldots & \ldots &
\ldots & 0 \\ \frac{1-p}{2} & \frac{(p+1)^2}{4} & \frac{1-p^2}{2} & 0 & \ldots & \ldots & \ldots & \vdots \\ 0 & \frac{1-p^2}{4} & \frac{(p+1)^2}{4}
& \frac{1-p^2}{4} & \ddots & 0 & \ldots & \vdots \\ \vdots & \frac{(1-p)^2}{4} & \frac{1-p^2}{4} & \frac{(p+1)^2}{4} & \frac{1-p^2}{4} & \ddots &
\ldots & \vdots \\ \vdots & \ldots & \ddots & \frac{1-p^2}{4} & \frac{(p+1)^2}{4} & \frac{1-p^2}{4} & \frac{(1-p)^2}{4} & \vdots \\
\vdots & \ldots & 0 & \ddots & \frac{1-p^2}{4} & \frac{(p+1)^2}{4} & \frac{1-p^2}{4} & 0 \\ \vdots & \ldots & \ldots &
\ldots & 0 & \frac{1-p^2}{4} & \frac{(p+1)^2}{4} & \frac{1-p}{2} \\ 0 & \ldots & \ldots & \ldots & 0 & \frac{(p-1)^2}{4} & \frac{1-p^2}{4} &\frac{p+1}{2}
\end{array}\right) .$}  \label{4}
\end{equation}
Similarly, primitive gossip matrix for odd number of nodes is expressed as
\begin{equation}
\resizebox{1.03 \hsize} {!} {$P_{odd}
=\left( \begin{array}{ccccccccc} \frac{1+p}{2} & \frac{1-p^2}{4} & \frac{(p-1)^2}{4} & 0 & \ldots & \ldots &
\ldots & 0 & 0 \\ \frac{1-p}{2} & \frac{(p+1)^2}{4} & \frac{1-p^2}{4} & 0 & \ldots & \ldots & \ldots & \vdots & \vdots
\\ 0 & \frac{1-p^2}{4} & \frac{(p+1)^2}{4} & \frac{1-p^2}{4} & \ddots & 0 & \ldots & \vdots & \vdots \\ \vdots & \frac{(p-1)^2}{4} & \frac{1-p^2}{4} &
\frac{(p+1)^2}{4} & \frac{1-p^2}{4} & \ddots & \ldots & \vdots & \vdots \\ \vdots & \ldots & \ddots & \frac{1-p^2}{4} &
\frac{(p+1)^2}{4} & \frac{1-p^2}{4} & \frac{(p-1)^2}{4} & \vdots & \vdots \\ \vdots & \ldots & 0 & \ddots & \frac{1-p^2}{4}& \frac{(p+1)^2}{4} & \frac{(1-p^2)}{4} & 0
& 0 \\ \vdots & \ldots & \ldots & \ldots & 0 & \frac{1-p^2}{4} & \frac{(p+1)^2}{4} & \frac{1-p^2}{4} & \frac{(p-1)^2}{4} \\ 0 & \ldots &
\ldots & \ldots & 0 & \frac{(p-1)^2}{4} & \frac{1-p^2}{4} & \frac{(p+1)^2}{4} & \frac{1-p^2}{4} \\ 0 & \ldots & \ldots & \ldots & \ldots
& \ldots & 0 & \frac{1-p}{2} & \frac{1+p}{2}\end{array}\right) . $}  \label{5}
\end{equation}%
\section{Explicit Eigenvalues of perturbed pentadiagonal Matrices using recurrent sequences}
In this section, we derive the eigenvalues of the following
non-symmetric perturbed pentadiagonal matrices%
 \begin{equation}
\resizebox{1.03 \hsize} {!} {$A_{2m+1}\left( \alpha ,\beta ,e,bd,c,bd\right)
=\left( \begin{array}{ccccccccc} e-\alpha & b & c & 0 & \ldots & \ldots &
\ldots & 0 & 0 \\ d & e & b & 0 & \ldots & \ldots & \ldots & \vdots & \vdots
\\ 0 & b & e & b & \ddots & 0 & \ldots & \vdots & \vdots \\ \vdots & c & b &
e & b & \ddots & \ldots & \vdots & \vdots \\ \vdots & \ldots & \ddots & b &
e & b & c & \vdots & \vdots \\ \vdots & \ldots & 0 & \ddots & b & e & b & 0
& 0 \\ \vdots & \ldots & \ldots & \ldots & 0 & b & e & b & c \\ 0 & \ldots &
\ldots & \ldots & 0 & c & b & e & b \\ 0 & \ldots & \ldots & \ldots & \ldots
& \ldots & 0 & d & e-\beta\end{array}\right) . $}  \label{1.1}
\end{equation}%
and%
\begin{equation}
\resizebox{1.03 \hsize} {!} {$A_{2m}\left( \alpha ,\beta ,e,bd,c,bd\right)
=\left( \begin{array}{cccccccc} e-\alpha & b & c & 0 & \ldots & \ldots &
\ldots & 0 \\ d & e & b & 0 & \ldots & \ldots & \ldots & \vdots \\ 0 & b & e
& b & \ddots & 0 & \ldots & \vdots \\ \vdots & c & b & e & b & \ddots &
\ldots & \vdots \\ \vdots & \ldots & \ddots & b & e & b & c & \vdots \\
\vdots & \ldots & 0 & \ddots & b & e & b & 0 \\ \vdots & \ldots & \ldots &
\ldots & 0 & b & e & d \\ 0 & \ldots & \ldots & \ldots & 0 & c & b &
e-\beta\end{array}\right) .$}  \label{1}
\end{equation}

We apply the well known Gaussian elimination method to obtain the
characteristic polynomials of the matrices (\ref{1.1}, \ref{1}) as
orthogonal polynomials of second kind solutions of recurrent sequence
relations.

\begin{remark}
We observe that the presence of `e' on the main diagonal is redundant. In
fact, it can be considered as $0$. 
\end{remark}

\subsection{Case$\protect\alpha $=$\protect\beta =0$}

\subsubsection{Step 1}

In this step, we study the case when $d=b$. If we denote by $\Delta _{n}$
the characteristic polynomial of the matrix $A_{n}$,\ then the Laplace
expansion by minors along the last column provides

\begin{equation}
\Delta _{2m+2}=Y\Delta _{2m+1}-b\Delta _{2m+1}^{+},  \label{Rel bb 2}
\end{equation}%
where $Y=e-\lambda$.

Here, $\Delta _{2m+1}$ is the determinant of the matrix obtained from $%
A_{2m+2}-\lambda I_{2m+2}$ by deleting its last row and column and $\Delta
_{2m+1}^{+}$ is the determinant of the matrix obtained from $%
A_{2m+1}-\lambda I_{2m+1}$ by replacing the elements $c$ and $b$ with $b$
and $Y$ respectively in the last row. \newline
The Laplace expansion by minors of $\Delta _{2m+1}$ and $\Delta _{2m+1}^{+}$%
\ each one along the last row provides%

\begin{equation}
\Delta _{2m+1}=Y\Delta _{2m}-b\Delta _{2m}^{-}  \label{Rel bb 1}
\end{equation}%
and%
\begin{equation}
\Delta _{2m+1}^{+}=b\Delta _{2m}-c\Delta _{2m}^{-}  \label{Rel bb 1+}
\end{equation}%
where $\Delta _{2m}$ is the determinant of the matrix obtained from $%
A_{2m+1}-\lambda I_{2m+1}$ by deleting its last row and last column. $\Delta
_{2m}^{-}$ is the determinant of the matrix obtained from $A_{2m}-\lambda
I_{2m}$ by replacing its last column $b$ and $Y$ \ by $c$ and $b$
respectively.\newline
Concerning the determinant $\Delta _{2m}^{-},$\ the Laplace expansion by
minors along the last column provides%

\begin{equation}
\Delta _{2m}^{-}=b\Delta _{2m-1}-c\Delta _{2m-1}^{+}.  \label{Rel bb 1-}
\end{equation}%

Combining the above formulas, we obtain%

\begin{equation*}
\Delta _{n+2}-\left( Y^{2}+c^{2}-2b^{2}\right) \Delta _{n}+\left(
cY-b^{2}\right) ^{2}\Delta _{n-2}=0,n=3,...
\end{equation*}%

For convenience, we assume that $\Delta _{m}^{\left( 1\right) }=\Delta
_{2m+1}$ and $\Delta _{m}^{\left( 2\right) }=\Delta _{2m}$. Consequently, we have the following two recurrent sequences%

\begin{equation*}
\Delta _{m+1}^{\left( j\right) }-\left( Y^{2}+c^{2}-2b^{2}\right) \Delta
_{m}^{\left( j\right) }+\left( cY-b^{2}\right) ^{2}\Delta _{m-1}^{\left(
j\right) }=0,
\end{equation*}%
$j=1,2,..., m=2,...,$.

Here, characteristic polynomial is given by
\begin{equation*}
\alpha ^{2}-\left( Y^{2}+c^{2}-2b^{2}\right) \alpha +\left( cY-b^{2}\right)
^{2}=0.
\end{equation*}%
Substituting%

\begin{equation*}
\left( Y^{2}+c^{2}-2b^{2}\right) ^{2}-4\left( cY-b^{2}\right) ^{2}=-4\left(
cY-b^{2}\right) ^{2}\sin ^{2}\theta ,
\end{equation*}%

which is equivalent to the following algebraic equation%
\begin{equation}
Y^{2}-2cY\cos \theta +c^{2}-2b^{2}\left( 1-\cos \theta \right) =0,  \label{&}
\end{equation}%
with roots%
\begin{equation*}
\alpha =\left( cY-b^{2}\right) e^{\pm i\theta }.
\end{equation*}%
Here, we denote $z$ as
\begin{equation*}
z=\left( cY-b^{2}\right)
\end{equation*}%
Finally, we can write the general solution of the above recurrent sequences as

\begin{equation*}
\Delta _{m}^{\left( j\right) }=z^{m}\left( C_{1}e^{-im\theta
}+C_{2}e^{im\theta }\right) ,\ \ j=1,2\text{ and}\ \ m=1,...,
\end{equation*}%

with the initial conditions%

\begin{equation*}
\Delta _{1}^{\left( j\right) }=Y,\ \ \ \Delta _{2}^{\left( j\right)
}=Y^{3}-2Yb^{2}+cb^{2},
\end{equation*}%
when $j=1$ and%

\begin{equation*}
\Delta _{1}^{\left( j\right) }=Y^{2}-b^{2},\ \ \ \Delta _{2}^{\left(
j\right) }=Y^{4}-3Y^{2}b^{2}+2Yb^{2}c+b^{4}-b^{2}c^{2},
\end{equation*}%

when $j=2$ and where $C_{1}$ and $C_{2}$ are real constants to be calculated.
Our main result in this section is as follows \\

\begin{theorem}
The characteristic polynomial $\Delta _{m}^{\left( j\right) }$ is%
\begin{equation}
\Delta _{m}^{\left( 1\right) }=z^{m}\frac{Y\sin \left( m+1\right) \theta
-c\sin m\theta }{\sin \theta },  \label{For bb 1}
\end{equation}%
and%
\begin{equation}
\Delta _{m}^{\left( 2\right) }=z^{m-1}\frac{z\sin \left( m+1\right) \theta
+\left( b^{2}-c^{2}\right) \sin m\theta }{\sin \theta }.  \label{For bb 2}
\end{equation}
\end{theorem}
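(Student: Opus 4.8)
The plan is to read \eqref{For bb 1} and \eqref{For bb 2} as the closed forms of the two second-order linear recurrences $\Delta_{m+1}^{(j)} - (Y^2+c^2-2b^2)\Delta_m^{(j)} + z^2\Delta_{m-1}^{(j)} = 0$, $j=1,2$, with $z=cY-b^2$, and to pin them down via the elementary fact that such a recurrence has a unique solution once two consecutive values are fixed. So I would (i) verify that each proposed expression solves the recurrence and (ii) check that it reproduces the prescribed initial data; uniqueness then yields the formula for every $m$. In the notation of the excerpt this is exactly the task of computing the constants $C_1,C_2$ of the general solution and confirming that they collapse to the stated trigonometric form.

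For step (i), I would expand the sines by Euler's formula, writing
\[ \Delta_m^{(1)} = \frac{z^m}{2i\sin\theta}\left[(Ye^{i\theta}-c)e^{im\theta} - (Ye^{-i\theta}-c)e^{-im\theta}\right], \]
and analogously for $\Delta_m^{(2)}$, which exhibits each $\Delta_m^{(j)}$ as a linear combination of the two fundamental solutions $z^m e^{\pm im\theta}$. Since $z e^{\pm i\theta}$ are precisely the roots of the characteristic polynomial $\alpha^2-(Y^2+c^2-2b^2)\alpha+z^2$ — their product is $z^2$ and, by the defining relation $2z\cos\theta = Y^2+c^2-2b^2$, their sum is $Y^2+c^2-2b^2$ — every such combination automatically satisfies the recurrence. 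This disposes of step (i) simultaneously for $j=1$ and $j=2$, so no induction on the recurrence itself is required.

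For step (ii), I would evaluate the closed forms at their lowest indices, using $\sin 2\theta/\sin\theta = 2\cos\theta$ and $\sin 3\theta/\sin\theta = 4\cos^2\theta-1$, and then eliminate all trigonometry through the single substitution $2z\cos\theta = Y^2+c^2-2b^2$ together with $z=cY-b^2$. This turns $z(2Y\cos\theta-c)$ into $Y^3-2b^2Y+cb^2$ and $2z\cos\theta+b^2-c^2$ into $Y^2-b^2$, and one index higher it returns the quartic $Y^4-3Y^2b^2+2Yb^2c+b^4-b^2c^2$, thereby matching the prescribed initial values of the reduced sequences $\Delta_m^{(1)}$ and $\Delta_m^{(2)}$.

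The main obstacle is bookkeeping rather than concept. For $j=2$ the higher initial value is quartic, so one must square the relation $2z\cos\theta = Y^2+c^2-2b^2$ and carefully cancel the resulting $c^4$ and $Y^2c^2$ terms so that every trace of $\cos\theta$ disappears and a clean polynomial identity in $Y,b,c$ survives. Equal care is needed to keep the index convention consistent between the reduced sequences $\Delta_m^{(1)}=\Delta_{2m+1}$, $\Delta_m^{(2)}=\Delta_{2m}$ and the original determinants, so that the two prescribed values really sit at two consecutive indices. Once both base values are verified, uniqueness of the recurrence solution completes the proof of \eqref{For bb 1} and \eqref{For bb 2}.
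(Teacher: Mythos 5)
Your proposal is correct, and it rests on the same two pillars as the paper's own proof: the fact that the solution space of the recurrence is spanned by $z^{m}e^{\pm im\theta}$ (equivalently, that $ze^{\pm i\theta}$ are the roots of $\alpha^{2}-\left(Y^{2}+c^{2}-2b^{2}\right)\alpha+z^{2}=0$), and the substitution $2z\cos \theta =Y^{2}+c^{2}-2b^{2}$ coming from (\ref{&}). The difference is one of direction. The paper \emph{derives}: it computes $C_{1},C_{2}$ from the initial determinants, writes $\Delta _{m}^{(j)}$ in the resulting two-term ``initial-value'' form, and then simplifies that general-$m$ expression (the quantities $E$ and $F$) down to (\ref{For bb 1}) and (\ref{For bb 2}) using (\ref{&}) together with the identity (\ref{Trig}). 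You \emph{verify}: the stated closed forms are manifestly linear combinations of the fundamental solutions with $m$-independent coefficients, hence satisfy the recurrence with no induction, and it only remains to check them at the lowest indices and invoke uniqueness. Your low-index checks are exactly the paper's simplifications read backwards --- e.g. $z\left(2Y\cos \theta -c\right)=Y^{3}-2Yb^{2}+cb^{2}$ and $2z\cos \theta +b^{2}-c^{2}=Y^{2}-b^{2}$ --- so what your route buys is that (\ref{Trig}) is never needed for general $m$, only evaluations at $m=0,1,2$. One caution you rightly anticipated: the paper's stated initial conditions for $j=1$ are shifted by one index relative to its own convention $\Delta _{m}^{(1)}=\Delta _{2m+1}$ (the values $Y$ and $Y^{3}-2Yb^{2}+cb^{2}$ are $\Delta _{1}=\Delta _{0}^{(1)}$ and $\Delta _{3}=\Delta _{1}^{(1)}$), so for $j=1$ your two consecutive checks must be made at $m=0$ and $m=1$, while for $j=2$ they sit at $m=1$ and $m=2$; with that alignment the uniqueness argument closes the proof.
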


\begin{proof}
When $j=1$, using initial conditions and calculating the constants $C_{1}$
and $C_{2}$, we obtain 
\begin{eqnarray*}
\frac{\Delta _{m}^{\left( 1\right) }}{z^{m-1}} &=&\frac{-zY\sin \left(
m-1\right) \theta +\left( Y^{3}-2Yb^{2}+cb^{2}\right) \sin m\theta }{\sin
\theta } \\
&=&\frac{E}{\sin \theta },
\end{eqnarray*}%
where%

\begin{equation*}
E=Y\left[ \left( Y^{2}-b^{2}\right) \sin m\theta -z\sin \left( m-1\right)
\theta \right] -b^{2}\left( Y-c\right) \sin m\theta .
\end{equation*}
From (\ref{&}), we have 
\begin{equation*}
Y^{2}-b^{2}=2z\cos \theta +b^{2}-c^{2},
\end{equation*}%
We use the following trigonometric identity%
\begin{equation}
2\cos \theta \sin m\theta -\sin \left( m-1\right) \theta =\sin \left(
m+1\right) \theta ,  \label{Trig}
\end{equation}%

Hence from (\ref{Trig}), we get

\begin{eqnarray*}
E &=&Yz\sin \left( m+1\right) \theta +\left[ \left( b^{2}-c^{2}\right)
Y-b^{2}\left( Y-c\right) \right] \sin m\theta \\
&=&z\left[ Y\sin \left( m+1\right) \theta -c\sin m\theta \right] .
\end{eqnarray*}%

Finally, we get (\ref{For bb 1}).\newline

For the $j=2$ case, we follow the similar steps as in the first case and obtain  

\begin{eqnarray*}
\frac{\Delta _{m}^{\left( 2\right) }}{z^{m-2}} &=&\frac{-z\Delta _{2}\sin
\left( m-2\right) \theta +\Delta _{4}\sin \left( m-1\right) \theta }{\sin
\theta } \\
&=&\frac{F}{\sin \theta },
\end{eqnarray*}%
where%
\begin{eqnarray*}
F &=&\left( Y^{2}-b^{2}\right) \left[ \left( Y^{2}-b^{2}\right) \sin \left(
m-1\right) \theta -z\sin \left( m-2\right) \theta \right] \\
&&-b^{2}\left( Y-c\right) ^{2}\sin \left( m-1\right) \theta .
\end{eqnarray*}%
Using (\ref{&}) and (\ref{Trig}), we obtain
\begin{eqnarray*}
F &=&\left( Y^{2}-b^{2}\right) z\sin m\theta \\
&&+\left[ \left( b^{2}-c^{2}\right) \left( Y^{2}-b^{2}\right) -b^{2}\left(
Y-c\right) ^{2}\right] \sin \left( m-1\right) \theta \\
&=&\left( Y^{2}-b^{2}\right) z\sin m\theta -z^{2}\sin \left( m-1\right)
\theta \\
&=&z\left[ \left( Y^{2}-b^{2}\right) \sin m\theta -z\sin \left( m-1\right)
\theta \right] .
\end{eqnarray*}%

Using (\ref{&}), we have%

\begin{equation*}
F=z\left[ \left( 2z\cos \theta +b^{2}-c^{2}\right) \sin m\theta -z\sin
\left( m-1\right) \theta \right]
\end{equation*}%

The trigonometric identity (\ref{Trig}), gives%

\begin{eqnarray*}
F &=&z^{2}\sin \left( m+1\right) \theta +\left( b^{2}-c^{2}\right) z\sin
m\theta \\
&=&z\left[ z\sin \left( m+1\right) \theta +\left( b^{2}-c^{2}\right) \sin
m\theta \right] .
\end{eqnarray*}%

Finally, we get (\ref{For bb 2}).
\end{proof}

\subsubsection{Step 2}

In this step, we derive the characteristic polynomial of the matrix
obtained from $A_{n}$ by taking in the upper corner $d=b$ which is
denoted as $\Delta _{n}^{\left( j\right) }\left( bb,bd\right) ,\ j=1,\ 2$.
Here, the calculations differ only from the first step. We start 
developing the characteristic polynomial of $A_{n}$\ along the last column.
The relationships that follow are the same as the previous step. Finally, we
get 

\begin{equation}
b\left( Y-c\right) \Delta _{m+1}^{\left( j\right) }\left( bb,bd\right)
=R\Delta _{m}^{\left( j\right) }\left( bb,bb\right) -S\Delta
_{m-1}^{\left( j\right) }\left( bb,bb\right) ,  \label{For gen}
\end{equation}%
where%
\begin{equation}
\left\{ 
\begin{array}{l}
R=z\left( bY-cd\right) +\left( Y-c\right) ^{2}\left( bY-cd\right) +d\left(
Y-c\right) z, \\ 
\\ 
S=z^{2}\left( bY-cd\right) .%
\end{array}%
\right.  \label{R-S}
\end{equation}%
Using (\ref{&}), we have%
\begin{equation}
\left\{ 
\begin{array}{l}
R=z\left[ 2\left( bY-cd\right) \cos \theta +\left( d\mathbf{-}b\right) Y%
\right] , \\ 
S=z^{2}\left( bY-cd\right) .%
\end{array}%
\right.  \label{R}
\end{equation}%
The trigonometric identity (\ref{Trig}), gives 
\begin{equation}
\Delta _{m+1}^{\left( j\right) }\left( bb,bd\right) =K\Delta _{m+1}^{\left(
j\right) }+Lz\Delta _{m}^{\left( j\right) },  \label{New For 1}
\end{equation}%
where%
\begin{equation}
K=\frac{bY-cd}{b\left( Y-c\right) },\ \ \ L=\frac{\left( d\mathbf{-}b\right)
Y}{b\left( Y-c\right) }.  \label{KL}
\end{equation}%
\newline
\begin{theorem}
\bigskip The characteristic polynomial of the matrices $A_{2m+1}\left(
bb,bd\right) $ and $A_{2m}\left( bb,bd\right) $ are%
\begin{equation}
\resizebox{1.04 \hsize} {!} {$\Delta _{m+1}^{\left( 1\right) }\left(
bb,bd\right) =\frac{z^{m}}{\sin \theta }\left\{ Yz\sin \left( m+2\right)
\theta -\left[ cz+\left( d\mathbf{-}b\right) b\left( Y-c\right) \right] \sin
\left( m+1\right) \theta \right\}.$}  \label{New Exp 1}
\end{equation}%
and%
\begin{equation}
\resizebox{1.04 \hsize} {!} {$\Delta _{m+1}^{\left( 2\right) }\left(
bb,bd\right) =\frac{z^{m}}{\sin \theta }\left\{ \begin{array}{l} z\sin
\left( m+2\right) \theta -\left[ \left( d\mathbf{-}2b\right) b+c^{2}\right]
\sin \left( m+1\right) \theta \\ +\left( d\mathbf{-}b\right) b\sin m\theta
\end{array}\right\} ,$}  \label{Exp 2}
\end{equation}%
respectively.
\end{theorem}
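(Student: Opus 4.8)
The plan is to treat the recurrence (\ref{New For 1}), $\Delta_{m+1}^{(j)}(bb,bd)=K\Delta_{m+1}^{(j)}+Lz\Delta_{m}^{(j)}$ with $K,L$ as in (\ref{KL}), as the engine of the proof and simply feed it the two closed forms (\ref{For bb 1}) and (\ref{For bb 2}) already established in Step~1 for the unperturbed lower corner. Writing $z=cY-b^{2}$ throughout, I would first record the ingredients at the shifted indices: for $j=1$, $\Delta_{m+1}^{(1)}=z^{m+1}[\,Y\sin(m+2)\theta-c\sin(m+1)\theta\,]/\sin\theta$ and $\Delta_{m}^{(1)}=z^{m}[\,Y\sin(m+1)\theta-c\sin m\theta\,]/\sin\theta$, with the analogous pair read off from (\ref{For bb 2}) for $j=2$. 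Substituting collapses the whole computation to simplifying a single expression of the form $z^{m}/\sin\theta$ times a linear combination of $\sin(m+2)\theta$, $\sin(m+1)\theta$ and $\sin m\theta$, and the only genuine work is cleaning up the coefficients.

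For $j=1$, after pulling out one factor of $z$, the bracket is $KY\sin(m+2)\theta+(LY-Kc)\sin(m+1)\theta-Lc\sin m\theta$. Since the target (\ref{New Exp 1}) carries only two angles, I would eliminate $\sin m\theta$ via the Chebyshev identity (\ref{Trig}) in the form $\sin m\theta=2\cos\theta\sin(m+1)\theta-\sin(m+2)\theta$. The coefficient of $\sin(m+2)\theta$ then becomes $KY+Lc$, which collapses to $Y$ once the common denominator $b(Y-c)$ of $K$ and $L$ is cleared, because the numerator factors as $Y\cdot b(Y-c)$. The coefficient of $\sin(m+1)\theta$ becomes $LY-Kc-2Lc\cos\theta$, and here the defining relation (\ref{&}), rewritten as $2z\cos\theta=Y^{2}+c^{2}-2b^{2}$, is essential: it removes the stray $\cos\theta$ and the unwanted $(Y-c)$ in the denominator, leaving exactly $-[\,cz+(d-b)b(Y-c)\,]$ after restoring the $z^{m}$ prefactor, which is (\ref{New Exp 1}).

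For $j=2$ the same substitution produces $Kz\sin(m+2)\theta+[\,K(b^{2}-c^{2})+Lz\,]\sin(m+1)\theta+L(b^{2}-c^{2})\sin m\theta$, already a three-term combination of the shape of the target (\ref{Exp 2}). Rather than forcing a term-by-term identity (which fails, since $Kz\neq z$ unless $d=b$), I would exploit that the three sines obey the linear relation $\sin(m+2)\theta-2\cos\theta\sin(m+1)\theta+\sin m\theta=0$ coming from (\ref{Trig}); thus my combination equals the target iff the coefficient difference is a scalar multiple of $(1,-2\cos\theta,1)$. Matching the outer two coefficients pins down that scalar as $cz(b-d)/[b(Y-c)]$ — both give the same value after the factorization $Y(b^{2}-c^{2})-b^{2}(Y-c)=-cz$ — and the middle coefficient then agrees precisely once (\ref{&}) is used to turn the $\cos\theta$ contribution into $c(d-b)(Y^{2}+c^{2}-2b^{2})$. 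This yields (\ref{Exp 2}).

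The routine parts are the two substitutions and the clearing of the $b(Y-c)$ denominators. The main obstacle, and the only place real work happens, is the middle $\sin(m+1)\theta$ coefficient in each case: there the naive expression carries both a $\cos\theta$ and a spurious $(Y-c)$ factor, and one must apply (\ref{&}) in exactly the form $2z\cos\theta=Y^{2}+c^{2}-2b^{2}$, together with the factorizations $KY+Lc=Y$ and $Y(b^{2}-c^{2})-b^{2}(Y-c)=-cz$, to expose the cancellations. Keeping the sign and index bookkeeping of the identity (\ref{Trig}) correct while performing these reductions is where errors are most likely to creep in.
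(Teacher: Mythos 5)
Your proposal is correct and follows essentially the same route as the paper: both feed the Step~1 closed forms (\ref{For bb 1}), (\ref{For bb 2}) into the recursion (\ref{New For 1}), clear the common denominator $b(Y-c)$ of $K$ and $L$, and invoke (\ref{&}) together with (\ref{Trig}) to collapse the three-term sine combination. The only difference is presentational — for $j=2$ you verify that the discrepancy from the target is proportional to $(1,-2\cos\theta,1)$ and hence annihilated by the sine relation, whereas the paper pushes the $2\cos\theta\sin(m+1)\theta$ term into the outer sines directly — but the underlying algebra (including the factorization $Y(b^{2}-c^{2})-b^{2}(Y-c)=-cz$ and the reduction of the middle coefficient to $c(d-b)(Y^{2}+c^{2}-2b^{2})$) is identical.
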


\begin{proof}
Substituting the $(\ref{For bb 2})$ and $(\ref{New For 1})$ in $(\ref{For bb 1})$ results in 
\begin{equation*}
\resizebox{1.04 \hsize} {!} {$\Delta _{m+1}^{\left( 1\right) }\left(
bb,bd\right) =z^{m+1}\frac{B_{m+2}^{\left( 1\right) }\sin \left( m+2\right)
\theta +B_{m+1}^{\left( 1\right) }\sin (m+1)\theta +B_{m}^{\left( 1\right)
}\sin m\theta }{b\left( Y-c\right) \sin \theta },$}
\end{equation*}%
and
\begin{equation*}
\resizebox{1.04 \hsize} {!} {$\Delta _{m+1}^{\left( 2\right) }\left(
bb,bd\right) =z^{m}\frac{B_{m+2}^{\left( 2\right) }\sin \left( m+2\right)
\theta +B_{m+1}^{\left( 2\right) }\sin (m+1)\theta +B_{m}^{\left( 2\right)
}\sin m\theta }{b\left( Y-c\right) \sin \theta },$}
\end{equation*}%
where%
\begin{eqnarray*}
B_{m+2}^{\left( 1\right) } &=&\left( bY-cd\right) Y,\ \ \ B_{m}^{\left(
1\right) }=-\left( d\mathbf{-}b\right) cY \\
B_{m+1}^{\left( 1\right) } &=&-\left( bY-cd\right) c+\left( d\mathbf{-}%
b\right) Y^{2},
\end{eqnarray*}%
and%
\begin{eqnarray*}
B_{m+2}^{\left( 2\right) } &=&\left( bY-cd\right) z,\ \ B_{m}^{\left(
2\right) }=\left( d\mathbf{-}b\right) Y\left( b^{2}-c^{2}\right) \\
B_{m+1}^{\left( 2\right) } &=&\left( bY-cd\right) \left( b^{2}-c^{2}\right)
+\left( d\mathbf{-}b\right) Yz.
\end{eqnarray*}%
Using 
\begin{equation*}
\sin m\theta +\sin \left( m+2\right) \theta =2\cos \theta \sin \left(
m+1\right) \theta ,
\end{equation*}%
the coefficients of $\Delta _{m+1}^{\left( 1\right) }\left( bb,bd\right) $
become%
\begin{eqnarray*}
B_{m+2}^{\left( 1\right) } &=&b\left( Y-c\right) Y,\ \ \ \ \ \
B_{m}^{\left( 1\right) }=0 \\
B_{m+1}^{\left( 1\right) } &=&-\left( bY-cd\right) c+\left( d\mathbf{-}%
b\right) Y^{2}-2\left( d\mathbf{-}b\right) cY\cos \theta
\end{eqnarray*}%
Applying (\ref{&}) to the coefficient $B_{m+1}^{\left( 1\right) }$ becomes%
\begin{equation*}
B_{m+1}^{\left( 1\right) }=b\left( Y-c\right) c-2\left( d\mathbf{-}b\right)
\left( \cos \theta -1\right) b^{2}.
\end{equation*}%
Using (\ref{&}), we get%
\begin{equation*}
B_{m+1}^{\left( 1\right) }=b\left( Y-c\right) c-\left( d\mathbf{-}b\right)
b^{2}\frac{\left( Y-c\right) ^{2}}{z}.
\end{equation*}%
Substituting in the expression of $\Delta _{m+1}^{\left( 1\right) }\left(
bb,bd\right) $ and simplifying the term $b\left( Y-c\right) $, we get (\ref%
{New Exp 1}).\newline
We use (\ref{&}) for $B_{m+1}^{\left( 2\right) }$, results in 
\begin{eqnarray*}
B_{m+1}^{\left( 2\right) } &=&\left( bY-cd\right) \left( b^{2}-c^{2}\right)
+\left( d\mathbf{-}b\right) cY^{2}-\left( d\mathbf{-}b\right) Yb^{2} \\
&=&-b\left( Y-c\right) \left[ \left( d\mathbf{-}2b\right) b+c^{2}\right]
+2\left( d\mathbf{-}b\right) cz\cos \theta .
\end{eqnarray*}%
Using (\ref{Trig}), we get%
\begin{eqnarray*}
B_{m+2}^{\left( 2\right) } &=&b\left( Y-c\right) z,\ \ \ \ B_{m}^{\left(
2\right) }=\left( d\mathbf{-}b\right) b^{2}\left( Y-c\right) , \\
B_{m+1}^{\left( 2\right) } &=&-b\left( Y-c\right) \left[ \left( d\mathbf{-}%
2b\right) b+c^{2}\right] ,\ \ 
\end{eqnarray*}%
Finally, we can simplify as (\ref{Exp 2}).
\end{proof}
\begin{remark}
In the subsequent sections, we use the following expression instead of (\ref%
{New Exp 1} )%
\begin{equation}
\resizebox{1.04 \hsize} {!} {$\Delta _{m+1}^{\left( 1\right) }\left(
bb,bd\right) =\frac{z^{m-1}}{b\sin \theta }\left. \begin{array}{l} z\left\{
\begin{array}{l} \left[ b\left( Y+c\right) -cd\right] \sin \left( m+2\right)
\theta \\ +\left[ \left( d\mathbf{-}b\right) \left( Y+c\right) -bc\right]
\sin \left( m+1\right) \theta -c\left( d\mathbf{-}b\right) \sin
m\theta\end{array}\right\} \\ -\left( d\mathbf{-}b\right)
\mathbf{c}^{2}\left( Y-c\right) \sin \left( m+1\right) \theta
,\end{array}\right. $}  \label{Exp 1}
\end{equation}%
which gives (\ref{New Exp 1}) by eliminating the term $-c\left( d\mathbf{-}%
b\right) \sin m\theta $ using trigonometric formulas.
\end{remark}

\subsubsection{Step 3}

In this step, we are interested in the calculus of the characteristic
polynomial of the matrices when $b\neq d$ in the two corners, which will be
denoted by $\Delta _{n}^{\left( j\right) }\left( bd,bd\right) ,\ j=1,\ 2$.
Applying the same reasoning as previous steps, we obtain 

\begin{equation}
\Delta _{m+1}^{\left( j\right) }\left( bd,bd\right) =K.\Delta _{m+1}^{\left(
j\right) }\left( bb,bd\right) +L.z\Delta _{m}^{\left( j\right) }\left(
bb,bd\right) ,\ \ j=1,2,  \label{New For 2}
\end{equation}%
where $K$ and $L$ are given by (\ref{KL}).

\begin{figure*}[!t]
{\normalsize 
}
\par
{\normalsize 
 }
\par
{\normalsize 
}
\par
\begin{theorem}
{\normalsize \bigskip The characteristic polynomial of the matrices $%
A_{2m+1}\left( bd,bd\right) $ are%
\begin{equation}
\Delta_{m + 1}^{(1)} (bd,bd) = z^m \frac{{\left\{ {Y(z + c^2 )\sin (m + 2)\theta  - c\left[ {z + 2b(Y - c) - c^2  + 2cY\cos \theta } \right]\sin (m + 1)\theta } \right\}}}{{sin\theta }} \label{Exp 1 bd bd}
\end{equation}
and%
\begin{eqnarray}
\sin \theta \frac{\Delta _{m+1}^{\left( 2\right) }\left( bd,bd\right) }{%
z^{m-1}} &=&z\left\{ 
\begin{array}{c}
z\sin \left( m+2\right) \theta +\left( 3b^{2}-2bd-c^{2}\right) \sin \left(
m+1\right) \theta \\ 
-\left( d\mathbf{-}b\right) \left( d\mathbf{-}3b\right) \sin m\theta +\left(
d\mathbf{-}b\right) ^{2}\sin \left( m-1\right) \theta%
\end{array}%
\right\}  \label{Exp 2 bddb} \\
&&+\left( d\mathbf{-}b\right) ^{2}c\left( Y-c\right) \sin m\theta .  \notag
\end{eqnarray}
}
\end{theorem}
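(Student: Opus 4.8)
The plan is to treat Step~3 as a pure linear-combination reduction. The recurrence (\ref{New For 2}) already expresses $\Delta_{m+1}^{(j)}(bd,bd)=K\,\Delta_{m+1}^{(j)}(bb,bd)+Lz\,\Delta_{m}^{(j)}(bb,bd)$ with $K,L$ given by (\ref{KL}), so the whole task is to substitute the Step~2 closed forms (\ref{New Exp 1}) and (\ref{Exp 2}), together with their $m\mapsto m-1$ shifts, and simplify. First I would note that $Lz\cdot z^{m-1}=Lz^{m}$, so both summands carry a common factor $z^{m}/\sin\theta$, while clearing $K$ and $L$ introduces the shared denominator $b(Y-c)$. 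After substitution, each of $\Delta_{m+1}^{(1)}(bd,bd)$ and $\Delta_{m+1}^{(2)}(bd,bd)$ is therefore a single fraction whose numerator is a trigonometric polynomial in $\sin(m+2)\theta,\ \sin(m+1)\theta,\ \sin m\theta$, and, for $j=2$, also $\sin(m-1)\theta$, the latter inherited from the shifted form of (\ref{Exp 2}).

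The next step is to collect the coefficient of each harmonic, keeping everything in factored $K,L$-form for transparency. For $j=1$ this gives three harmonics, with coefficients $KYz$, then $-K[cz+(d-b)b(Y-c)]+LYz$, and finally $-L[cz+(d-b)b(Y-c)]$. For $j=2$ it gives four harmonics, with coefficients $Kz$, then $-K[(d-2b)b+c^{2}]+Lz$, then $K(d-b)b-L[(d-2b)b+c^{2}]$, and finally $L(d-b)b$. These are the raw data that must be reconciled with the targets.

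The decisive step is the reduction. The targets (\ref{Exp 1 bd bd}) and (\ref{Exp 2 bddb}) carry fewer harmonics (two for $j=1$; four inside the braces plus an isolated residual $(d-b)^{2}c(Y-c)\sin m\theta$ for $j=2$), so I must fold the lowest harmonic down with (\ref{Trig}) in the form $\sin(m+1)\theta+\sin(m-1)\theta=2\cos\theta\sin m\theta$, and then eliminate every explicit $\cos\theta$ through the algebraic relation (\ref{&}), i.e.\ $2z\cos\theta=Y^{2}+c^{2}-2b^{2}$ (equivalently $Y^{2}-b^{2}=2z\cos\theta+b^{2}-c^{2}$). The crucial fact is that the apparent pole at $Y=c$ coming from the $b(Y-c)$ denominators of $K$ and $L$ must cancel: I would show that, once the trigonometric folding is carried out, the collected numerator is divisible by $b(Y-c)$, which is exactly what renders the answers polynomial. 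For $j=1$ this collapse produces the two-harmonic form (\ref{Exp 1 bd bd}); for $j=2$ the same mechanism keeps the four-term braces of (\ref{Exp 2 bddb}) intact but deposits the non-cancelling remainder $(d-b)^{2}c(Y-c)\sin m\theta$ outside them. It is convenient to begin from the equivalent form (\ref{Exp 1}) of $\Delta_{m+1}^{(1)}(bb,bd)$ flagged in the Remark, since its grouping already isolates the factor $b(Y-c)$ and shortens the bookkeeping.

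The main obstacle is precisely this bookkeeping: tracking the $b(Y-c)$ denominator through the substitution and proving its cancellation, while invoking (\ref{&}) repeatedly to trade $\cos\theta$, $Y^{2}-b^{2}$, and $z=cY-b^{2}$ against one another, and doing so in the $j=2$ case without annihilating the single surviving $\sin m\theta$ term. No idea beyond Steps~1--2 is required; the entire difficulty lies in organizing the substitutions so that the denominator drops out and the harmonics align with (\ref{Exp 1 bd bd}) and (\ref{Exp 2 bddb}).
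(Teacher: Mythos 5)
Your strategy coincides with the paper's own: substitute the Step~2 formulas (\ref{New Exp 1}) and (\ref{Exp 2}) into (\ref{New For 2}), clear the denominator $b(Y-c)$ carried by the $K$ and $L$ of (\ref{KL}), collect harmonics, fold the extremal one with (\ref{Trig}), and use (\ref{&}) to make $b(Y-c)$ cancel; your raw coefficient lists for both $j=1$ and $j=2$ are correct. The genuine gap is in the $j=1$ reduction: the collapse to (\ref{Exp 1 bd bd}) is \emph{not} achievable for general $b,c,d$, and no amount of folding will produce it. Concretely, writing $D_{m+2}^{(1)},D_{m+1}^{(1)},D_{m}^{(1)}$ for your cleared coefficients, the fold $\sin m\theta =2\cos \theta \sin (m+1)\theta -\sin (m+2)\theta $ gives as new leading coefficient
\begin{equation*}
D_{m+2}^{(1)}-D_{m}^{(1)}=(bY-cd)Yz+(d-b)Y\left[ cz+(d-b)b(Y-c)\right] =bY(Y-c)\left[ z+(d-b)^{2}\right] ,
\end{equation*}
so after cancelling $b(Y-c)$ the $\sin (m+2)\theta$ coefficient is $Y\left[ z+(d-b)^{2}\right]$, which agrees with the stated $Y(z+c^{2})$ only when $(d-b)^{2}=c^{2}$. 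The paper's proof inserts exactly this hypothesis mid-stream (``Assume $d-b=c$''), and only then do (\ref{&}) and (\ref{Trig}) reduce the middle coefficient to $-c\left[ z+2b(Y-c)-c^{2}+2cY\cos \theta \right]$. The assumption is harmless for the intended application, where $d-b=w-(-w^{2}+w)=w^{2}=c$, but it is indispensable; your proposal, which promises the reduction with no such relation, would terminate at a different two-harmonic formula, not (\ref{Exp 1 bd bd}).

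A secondary defect concerns $j=2$. There the collected numerator is not divisible by $b(Y-c)$ as a formal identity in $Y$ and the harmonics; divisibility only emerges through (\ref{&}), and this is exactly why (\ref{Exp 2 bddb}) has a leftover term at all. The paper's mechanism is to evaluate the numerator $\Psi ^{(2)}$ at $Y=c$, note that $\Psi ^{(2)}(c)$ carries the factor $2(\cos \theta -1)$, convert this via (\ref{&}) into $(Y-c)^{2}/z$, and then split $\Psi ^{(2)}(Y)=\left[ \Psi ^{(2)}(Y)-\Psi ^{(2)}(c)\right] +\Psi ^{(2)}(c)$, the bracket being divisible by $Y-c$ as a polynomial; it is this splitting that deposits $(d-b)^{2}c(Y-c)\sin m\theta$ outside the braces. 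Your text asserts divisibility of the whole numerator and only gestures at ``the non-cancelling remainder'' without a mechanism for isolating it, so as written the $j=2$ bookkeeping cannot close; you need this evaluation-at-$Y=c$ device (or an equivalent) stated explicitly.
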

\par
{\normalsize 
\hrulefill 
\vspace*{4pt} }
\end{figure*}

\begin{proof}
When $n$ is odd, we substitute the $\Delta _{m}^{\left( 1\right) }\left( bb,bd\right) $ and $\Delta
_{m-1}^{\left( 1\right) }\left( bb,bd\right) $\  in $(\ref{New For 2})$ to obtain 
\begin{equation*}
\resizebox{1.02 \hsize} {!} {$\Delta _{m+1}^{\left( 1\right) }\left(
bd,bd\right) =z^{m+1}\frac{D_{m+2}^{\left( 1\right) }\sin \left( m+2\right)
\theta +D_{m+1}^{\left( 1\right) }\sin (m+1)\theta +D_{m}^{\left( 1\right)
}\sin m\theta }{b\left( Y-c\right) \sin \theta },$}
\end{equation*}%
where%
\begin{equation*}
\resizebox{1.02 \hsize} {!} {$\left\{ \begin{array}{l} D_{m+2}^{\left(
1\right) }=\left( bY-cd\right) Yz, \\ D_{m+1}^{\left( 1\right) }=-\left[
c\left( bY-cd\right) z+\left( d\mathbf{-}b\right) b\left( bY-cd\right)
\left( Y-c\right) -\left( d\mathbf{-}b\right) Y^{2}z\right] , \\
D_{m}^{\left( 1\right) }=-\left( d\mathbf{-}b\right) Y\left[ cz+\left(
d\mathbf{-}b\right) b\left( Y-c\right) \right] . \end{array}\right. $}
\end{equation*}%
Assume $d-b=c$, then%
\begin{equation*}
D_{m+2}^{\left( 1\right) }-D_{m}^{\left( 1\right) }=b\left( Y-c\right) Y 
\left[ z+c^{2}\right] =b\left( Y-c\right) Y\left[ z+c^{2}\right] .
\end{equation*}%
Using trigonometric formulas, we obtain 
\begin{equation*}
D_{m}^{\left( 1\right) }\left[ \sin \left( m+2\right) \theta +\sin m\theta %
\right] =2D_{m}^{\left( 1\right) }\sin \left( m+1\right) \theta \cos \theta ,
\end{equation*}%
then%
\begin{equation*}
\resizebox{1.02 \hsize} {!} {$\Delta _{m+1}^{\left( 1\right) }\left(
bd,bd\right) =z^{m+1}\frac{\left\{ \begin{array}{l} \left[ D_{m+2}^{\left(
1\right) }-D_{m}^{\left( 1\right) }\right] \sin \left( m+2\right) \theta \\
+\left[ D_{m+1}^{\left( 1\right) }+2D_{m}^{\left( 1\right) }\cos \theta
\right] \sin (m+1)\theta \end{array}\right\} }{b\left( Y-c\right) \sin
\theta },$}
\end{equation*}%
We have%
\begin{equation*}
\resizebox{1.02 \hsize} {!} {$D_{m+1}^{\left( 1\right) }+2D_{m}^{\left(
1\right) }\cos \theta =\frac{\left\{ \begin{array}{l} \left[ b\left(
Y-c\right) -c^{2}-Y^{2}+2cY\cos \theta \right] cz \\ +bc\left( Y-c\right)
\left[ b\left( Y-c\right) -c^{2}+2cY\cos \theta \right] \end{array}\right\}
}{b\left( Y-c\right) }.$}
\end{equation*}%
Using (\ref{&}) 
\begin{equation*}
\resizebox{1.02 \hsize} {!} {$D_{m+1}^{\left( 1\right) }+2D_{m}^{\left(
1\right) }\cos \theta =\frac{\left\{ \begin{array}{l} \left[ b\left(
Y-c\right) +2b^{2}\left( \cos \theta -1\right) \right] cz \\ +bc\left(
Y-c\right) \left[ b\left( Y-c\right) -c^{2}+2cY\cos \theta \right]
\end{array}\right\} }{b\left( Y-c\right) .}.$}
\end{equation*}%
Using again (\ref{&}) 
\begin{equation*}
D_{m+1}^{\left( 1\right) }+2D_{m}^{\left( 1\right) }\cos \theta =c\left\{
z+2b\left( Y-c\right) -c^{2}+2cY\cos \theta \right\} .
\end{equation*}%
This gives (\ref{Exp 1 bd bd}).\newline
For $n$ even, we substitute in (\ref{New For 2})$\ $the quantities $\Delta
_{m}^{\left( 2\right) }\left( bb,bd\right) $ and $\Delta _{m-1}^{\left(
2\right) }\left( bb,bd\right) $\ by theirs expressions given by (\ref{Exp 2}%
), we get%
\begin{equation*}
\resizebox{1.02 \hsize} {!} {$\Delta _{m+1}^{\left( 2\right) }\left(
bd,bd\right) =z^{m}\frac{\left\{ \begin{array}{l} D_{m+2}^{\left( 2\right)
}\sin \left( m+2\right) \theta +D_{m+1}^{\left( 2\right) }\sin (m+1)\theta
\\ +D_{m}^{\left( 2\right) }\sin m\theta +D_{m-1}^{\left( 2\right) }\sin
\left( m-1\right) \theta \end{array}\right\} }{b\left( Y-c\right) \sin
\theta }.$}
\end{equation*}%
In order to simplify the above expression, we use%
\begin{equation*}
\left. 
\begin{array}{l}
\Psi ^{\left( 2\right) }\left( Y\right) =b\left( Y-c\right) \sin \theta 
\frac{\Delta _{m+1}^{\left( 2\right) }\left( bd,bd\right) }{z^{m+1}} \\ 
=D_{m+2}^{\left( 2\right) }\sin \left( m+2\right) \theta +D_{m+1}^{\left(
2\right) }\sin (m+1)\theta \\ 
+D_{m}^{\left( 2\right) }\sin m\theta +D_{m-1}^{\left( 2\right) }\sin \left(
m-1\right) \theta .%
\end{array}%
\right.
\end{equation*}%
We use (\ref{For gen}) to simplify%
\begin{equation*}
\resizebox{1.02 \hsize} {!} {$\Psi ^{\left( 2\right) }\left( Y\right)
=\left\{ \begin{array}{l} \frac{R}{\sin \theta }z^{m}\left\{
\begin{array}{l} z\sin \left( m+1\right) \theta -\left[ \left(
d\mathbf{-}2b\right) b+c^{2}\right] \sin m\theta \\ +\left(
d\mathbf{-}b\right) b\sin \left( m-1\right) \theta \end{array}\right\} \\
-\frac{S}{\sin \theta }z^{m-1}\left\{ \begin{array}{l} z\sin m\theta -\left[
\left( d\mathbf{-}2b\right) b+c^{2}\right] \sin \left( m-1\right) \theta \\
+\left( d\mathbf{-}b\right) b\sin \left( m-2\right) \theta
\end{array}\right\} .\end{array}\right. $}
\end{equation*}%
Using the expressions of $R$ and $S$ given by (\ref{R-S})\ and applying the
trigonometric identity (\ref{Trig}), we get%
\begin{equation*}
\resizebox{1.2 \hsize} {!} {$\begin{array}{lcl}
\Psi ^{\left( 2\right) }\left( Y\right) & = & \left( bY-cd\right) \left\{ z\sin
\left( m+2\right) \theta -\left[ \left( d\mathbf{-}2b\right) b+c^{2}\right]
\sin \left( m+1\right) \theta +\left( d\mathbf{-}b\right) b\sin m\theta
\right\}  \\
& & +\left( d\mathbf{-}b\right) Y\left\{ z\sin \left( m+1\right) \theta -\left[
\left( d\mathbf{-}2b\right) b+c^{2}\right] \sin m\theta +\left( d\mathbf{-}%
b\right) b\sin \left( m-1\right) \theta \right\} . 
\end{array}$} 
\end{equation*}
Applying (\ref{Trig}), we get%
\begin{equation*}
\resizebox{1.02 \hsize} {!} {$\Psi ^{\left( 2\right) }\left( c\right)
=2\left( \cos \theta -1\right) \left( d\mathbf{-}b\right) c\left\{ \left(
b^{2}-c^{2}\right) \sin \left( m+1\right) \theta +\left( d\mathbf{-}b\right)
b\sin m\theta \right\} .$}
\end{equation*}%
But from (\ref{&}), we have%
\begin{equation*}
2\left( \cos \theta -1\right) =\frac{\left( Y-c\right) ^{2}}{z},
\end{equation*}%
then%
\begin{equation*}
\left. 
\begin{array}{l}
\Psi ^{\left( 2\right) }\left( Y\right) =:\Psi ^{\left( 2\right) }\left(
Y\right) -\Psi ^{\left( 2\right) }\left( c\right) +\Psi ^{\left( 2\right)
}\left( c\right) \\ 
=\left( Y-c\right) \left\{ 
\begin{array}{l}
\left[ bc\left( Y+c\right) -b^{3}-c^{2}d\right] \sin \left( m+2\right) \theta
\\ 
-\left[ \left( d\mathbf{-}2b\right) b+c^{2}\right] b\sin \left( m+1\right)
\theta \\ 
+\left( d\mathbf{-}b\right) b^{2}\sin m\theta%
\end{array}%
\right\} \\ 
+\left( d\mathbf{-}b\right) \left( Y-c\right) \left\{ 
\begin{array}{l}
\left[ c\left( Y+c\right) -b^{2}\right] \sin \left( m+1\right) \theta \\ 
-\left[ \left( d\mathbf{-}2b\right) b+c^{2}\right] \sin m\theta \\ 
+\left( d\mathbf{-}b\right) b\sin \left( m-1\right) \theta%
\end{array}%
\right\} \\ 
+\left( d\mathbf{-}b\right) c\left\{ 
\begin{array}{c}
\left( b^{2}-c^{2}\right) \sin \left( m+1\right) \theta \\ 
+\left( d\mathbf{-}b\right) b\sin m\theta%
\end{array}%
\right\} \frac{\left( Y-c\right) ^{2}}{z}.%
\end{array}%
\right.
\end{equation*}%
That is%
\begin{equation*}
\left. 
\begin{array}{l}
b\sin \theta \tfrac{\Delta _{m+1}^{\left( 2\right) }\left( bd,bd\right) }{%
z^{m+1}} \\ 
=\left\{ 
\begin{array}{l}
\left[ bz-\left( d\mathbf{-}b\right) c^{2}\right] \sin \left( m+2\right)
\theta \\ 
-\left[ \left( d\mathbf{-}2b\right) b+c^{2}\right] b\sin \left( m+1\right)
\theta \\ 
+\left( d\mathbf{-}b\right) b^{2}\sin m\theta%
\end{array}%
\right\} \\ 
+\left( d\mathbf{-}b\right) \left\{ 
\begin{array}{l}
\left[ z+c^{2}\right] \sin \left( m+1\right) \theta \\ 
-\left[ \left( d\mathbf{-}2b\right) b+c^{2}\right] \sin m\theta \\ 
+\left( d\mathbf{-}b\right) b\sin \left( m-1\right) \theta%
\end{array}%
\right\} \\ 
+\left( d\mathbf{-}b\right) c\left\{ 
\begin{array}{c}
\left( b^{2}-c^{2}\right) \sin \left( m+2\right) \theta + \\ 
\left( d\mathbf{-}b\right) b\sin \left( m+1\right) \theta%
\end{array}%
\right\} \frac{\left( Y-c\right) }{z}.%
\end{array}%
\right.
\end{equation*}%
This ends the proof of the Theorem $3$.
\end{proof}
\subsection{General case}
The calculus when $\alpha $ and $\beta $ are nondescriptive become very
complicated and the expressions of the characteristic polynomials are very long.
In this section, we deal the case of $\alpha =\beta =-b$.
\begin{theorem}
The eigenvalues of the matrices are the couples $\lambda _{i,k}=e-Y_{i,k},\
i=1,2$ and $k=0,1,...,m-1$, where%
\begin{equation}
\left\{ 
\begin{array}{l}
Y_{1,0}=-\left( 2b+c\right) , \\ 
Y^2_{i,k}-2\left( cY_{i,k}-b^{2}\right) \cos \tfrac{\left( 2k+1\right) \pi }{%
2m+1}-\left( 2b^{2}-c^{2}\right) =0,\ \ \ \ 
\end{array}%
\right.   \label{Eig. 1}
\end{equation}%
$k=0,1,2,...,m-1,$ when $n$ is odd and%
\begin{equation}
\left\{ 
\begin{array}{l}
Y_{1,0}=c,\ \ Y_{2,0}=-\left( 2b+c\right) , \\ 
Y^2_{i,k}-2\left( cY_{i,k}-b^{2}\right) \cos \left( \tfrac{k\pi }{m}\right)
-\left( 2b^{2}-c^{2}\right) =0\ \ 
\end{array}%
\right.   \label{Eig. 2}
\end{equation}%
$,\ \ k=1,2,...,m-1,$ when $n$ is even.
\end{theorem}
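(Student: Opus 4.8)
The plan is to reduce the eigenvalue problem to the vanishing of the perturbed characteristic polynomial and then read off the admissible angles $\theta$. Recall from Step~1 that once we set $Y=e-\lambda$ and introduce $\theta$ through equation (\ref{&}), the two decoupled recurrent sequences $\Delta_m^{(1)}$ and $\Delta_m^{(2)}$ have general solutions built from $\sin m\theta$, and that (\ref{&}) may be rewritten as
\[
Y^{2}-2\left(cY-b^{2}\right)\cos\theta-\left(2b^{2}-c^{2}\right)=0 .
\]
For each admissible $\theta$ this quadratic in $Y$ supplies a pair of values $Y_{1},Y_{2}$, which is exactly the ``couples $\lambda_{i,k}=e-Y_{i,k}$'' structure in the statement. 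Thus the whole problem comes down to identifying the correct finite set of angles $\theta_k$ together with the handful of degenerate roots.

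First I would assemble the characteristic polynomial for $\alpha=\beta=-b$. Since this perturbation merely adds $b$ to the two corner diagonal entries of $A_n-\lambda I$, multilinearity of the determinant in the first and last rows expresses it as
\[
\Delta_n^{\mathrm{pert}}=\Delta_n(bd,bd)+b\,M_1+b\,M_n+b^{2}M_{1n},
\]
where $M_1,M_n$ are the minors deleting the first (resp.\ last) row and column and $M_{1n}$ deletes both. Deleting the first row and column of $A_n(bd,bd)$ opens its top corner while leaving the bottom corner intact, so $M_1$ is precisely a $\Delta_{n-1}(bb,bd)$-type determinant already computed in Theorem~2; by symmetry $M_n$ is its mirror, and $M_{1n}$ is a $\Delta_{n-2}(bb,bb)$ determinant from Theorem~1. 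Substituting the trigonometric closed forms (\ref{For bb 1}), (\ref{For bb 2}), (\ref{Exp 1}) and (\ref{Exp 2 bddb}) and collapsing the products of sines with identity (\ref{Trig}) and relation (\ref{&}) should compress $\Delta_n^{\mathrm{pert}}$ into a single compact factor of the form $z^{p}\,\bigl[\text{short sine/cosine expression}\bigr]/\sin\theta$.

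Next I would set that factor to zero. The expectation is that the generic eigenvalue condition is $\cos\tfrac{(2m+1)\theta}{2}=0$ in the odd case and $\sin m\theta=0$ in the even case, producing exactly the angle sets $\theta_k=\tfrac{(2k+1)\pi}{2m+1}$ and $\theta_k=\tfrac{k\pi}{m}$ appearing in (\ref{Eig. 1}) and (\ref{Eig. 2}); for each such $\theta_k$ the quadratic above yields the pair $(Y_{1,k},Y_{2,k})$. The degenerate angles $\theta=0$ and $\theta=\pi$, where $\sin\theta$ vanishes in the denominator, must be examined by direct substitution: at $\theta=0$ the quadratic collapses to $(Y-c)^{2}=0$ giving $Y=c$, while at $\theta=\pi$ it becomes $(Y+c)^{2}=4b^{2}$, of whose roots only $Y=-(2b+c)$ survives. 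This is what produces the isolated eigenvalues $Y_{1,0}=-(2b+c)$ in the odd case and $Y_{1,0}=c,\ Y_{2,0}=-(2b+c)$ in the even case. I would close by checking the multiplicity bookkeeping, namely $1+2m=n$ (odd) and $2+2(m-1)=n$ (even), confirming that every eigenvalue is accounted for exactly once.

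The hard part will be the middle step, the algebraic collapse of $\Delta_n^{\mathrm{pert}}$ into a clean factored form, together with the delicate accounting at the degenerate angles. The authors themselves note that for general $\alpha,\beta$ ``the expressions of the characteristic polynomials are very long,'' so the point of the special choice $\alpha=\beta=-b$ is precisely that the corner contributions $b\,M_1+b\,M_n+b^{2}M_{1n}$ conspire to cancel the awkward boundary terms and leave a pure sine/cosine factor. Verifying that cancellation, and explaining why at $\theta=\pi$ only one root of the quadratic is a genuine eigenvalue while in the odd case $\theta=0$ contributes none at all, is where the real care is needed to avoid over- or under-counting.
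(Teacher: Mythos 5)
Your proposal is essentially the paper's own proof: the authors likewise expand the perturbed determinant by multilinearity in the first and last columns, writing $\Delta _{m+1}^{\left( 1\right) }\left( \alpha ,\beta ,bd,bd\right) =\Delta _{m+1}^{\left( 1\right) }\left( bd,bd\right) -\alpha \Delta _{m+1}^{\left( 2\right) }\left( bb,bd\right) -\beta \Delta _{m+1}^{\left( 2\right) }\left( bd,bb\right) +\alpha \beta \Delta _{m}^{\left( 1\right) }\left( bb,bb\right)$ (and its even analogue), substituting the closed forms of Theorems 1--3, and collapsing via (\ref{&}) and (\ref{Trig}) to $\left( Y+2b+c\right) z^{m}\cos \left( \left( 2m+1\right) \tfrac{\theta }{2}\right) /\cos \tfrac{\theta }{2}$ in the odd case and $z^{m-1}\left( \left( Y+b\right) ^{2}-\left( b+c\right) ^{2}\right) \sin m\theta /\sin \theta$ in the even case, which yield exactly your angle families and isolated roots. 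The only executional differences are that the paper obtains the isolated eigenvalues $Y=-\left(2b+c\right)$ and $Y=c$ as roots of these polynomial prefactors rather than by substituting the degenerate angles $\theta =0,\pi$ into the quadratic (though the resulting values coincide), and that the clean collapse additionally requires $d-b=c$, a condition the paper invokes explicitly and which your sketch's expected ``conspiracy of cancellations'' implicitly depends on.
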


\begin{proof}
When $n$ is odd, $\alpha \neq 0$ and $\beta \neq 0$, then expanding the
determinant of the matrix $A_{2m+1}\left( \alpha ,\beta ,Y,bd,c,bd\right)
-\lambda I_{2m+1}$ in terms of the first and last columns and using the
linear property of the determinants with regard to its columns results in%
\begin{equation*}
\left. 
\begin{array}{l}
\Delta _{m+1}^{\left( 1\right) }\left( \alpha ,\beta ,bd,bd\right) =\Delta
_{m+1}^{\left( 1\right) }\left( bd,bd\right) -\alpha \Delta _{m+1}^{\left(
2\right) }\left( bb,bd\right)  \\ 
-\beta \Delta _{m+1}^{\left( 2\right) }\left( bd,bb\right) +\alpha \beta
\Delta _{m}^{\left( 1\right) }\left( bb,bb\right) .%
\end{array}%
\right. 
\end{equation*}%
Taking $\alpha =\beta =-b\ $and substituting the expressions of $\Delta
_{m+1}^{\left( 1\right) }\left( bd,bd\right) $, $\Delta _{m+1}^{\left(
2\right) }\left( bb,bd\right) $ and $\Delta _{m-1}^{\left( 2\right) }\left(
bb,bb\right) $ given by (\ref{Exp 1 bd bd}), (\ref{Exp 2}) and (\ref{For bb
2}) respectively, we get%
\begin{equation*}
\resizebox{1.02 \hsize} {!} {$\Delta _{m+1}^{\left( 1\right) }\left( \alpha
,\beta ,bd,bd\right) =z^{m}\tfrac{E_{m+2}^{\left( 1\right) }\sin \left(
m+2\right) \theta +E_{m+1}^{\left( 1\right) }\sin (m+1)\theta +E_{m}^{\left(
1\right) }\sin m\theta }{\sin \theta },$}
\end{equation*}%
where%
\begin{equation*}
\left\{ 
\begin{array}{l}
E_{m+2}^{\left( 1\right) }=Y\left( z+c^{2}\right) +2bz, \\ 
E_{m+1}^{\left( 1\right) }=-\left[ 
\begin{array}{c}
cz+2bc\left( Y-c\right) -c^{3}+2c^{2}Y\cos \theta  \\ 
+2b\left( bc+c^{2}-b^{2}\right) -b^{2}Y%
\end{array}%
\right] , \\ 
E_{m}^{\left( 1\right) }=b^{2}c.%
\end{array}%
\right. 
\end{equation*}%
Using 
\begin{equation*}
b^{2}c\left[ \sin m\theta +\sin \left( m+2\right) \theta \right]
=2b^{2}c\sin \left( m+1\right) \theta \cos \theta ,
\end{equation*}%
then%
\begin{equation*}
\Delta _{m+1}^{\left( 1\right) }\left( \alpha ,\beta ,bd,bd\right) =P\sin
\left( m+2\right) \theta -Q\sin \left( m+1\right) \theta ,
\end{equation*}%
Simplifying the above expression further results in%
\begin{eqnarray*}
P &=&Q=cY^{2}+\left( c^{2}+2bc-b^{2}\right) Y-b^{2}\left( 2b+c\right)  \\
&=&\left( cY-b^{2}\right) \left( Y+2b+c\right) =z\left( Y+2b+c\right) 
\end{eqnarray*}%
Trigonometric formulas give%
\begin{equation*}
\sin \left( m+2\right) \theta -\sin \left( m+1\right) \theta =2\sin \left( 
\tfrac{\theta }{2}\right) \cos \left( 2m+3\right) \tfrac{\theta }{2}.
\end{equation*}%
Finally, we obtain%
\begin{equation*}
\left. 
\begin{array}{l}
\Delta _{m+1}^{\left( 1\right) }\left( \alpha ,\beta ,bd,bd\right) = \\ 
2\left( Y+2b+c\right) z^{m+1}\tfrac{\sin \left( \frac{\theta }{2}\right)
\cos \left( 2m+3\right) \frac{\theta }{2}}{\sin \theta },%
\end{array}%
\right. 
\end{equation*}%
which can be written%
\begin{equation*}
\left. \Delta _{m}^{\left( 1\right) }\left( \alpha ,\beta ,bd,bd\right)
=\left( Y+2b+c\right) z^{m}\tfrac{\cos \left( 2m+1\right) \frac{\theta }{2}}{%
\cos \frac{\theta }{2}}.\right. 
\end{equation*}%
This gives (\ref{Eig. 1}).\newline
For $n$ even, we obtain%
\begin{equation*}
\left. 
\begin{array}{l}
\Delta _{m+2}^{\left( 2\right) }\left( \alpha ,\beta ,bd,bd\right) =\Delta
_{m+2}^{\left( 2\right) }\left( bd,bd\right) -\alpha \Delta _{m+1}^{\left(
1\right) }\left( bb,bd\right)  \\ 
-\beta \Delta _{m+1}^{\left( 1\right) }\left( bd,bb\right) +\alpha \beta
\Delta _{m+1}^{\left( 2\right) }\left( bb,bb\right) ,\newline
\end{array}%
\right. 
\end{equation*}%
\newline
where $\Delta _{m+2}^{\left( 2\right) }\left( bd,bd\right) ,\ \Delta
_{m+1}^{\left( 1\right) }\left( bb,bd\right) =:\Delta _{m+1}^{\left(
1\right) }\left( bd,bb\right) $ and $\Delta _{m+1}^{\left( 2\right) }\left(
bb,bb\right) $ are given by (\ref{Exp 2 bddb}), (\ref{Exp 1}) and (\ref{For
bb 2}) respectively. Replacing in the above formula, yields%
\begin{eqnarray*}
&&\left. 
\begin{array}{l}
\tfrac{b\sin \theta \Delta _{m+2}^{\left( 2\right) }\left( \alpha ,\beta
,bd,bd\right) }{z^{m}}=bz^{2}\sin \left( m+3\right) \theta +C_{m+2}\sin
\left( m+2\right)  \\ 
+\left( C_{m+1}^{\left( 1\right) }+C_{m+1}^{\left( 2\right)
}+C_{m+1}^{\left( 3\right) }\right) \sin \left( m+1\right) \theta +C_{m}\sin
m\theta ,%
\end{array}%
\right.  \\
&&
\end{eqnarray*}%
where%
\begin{equation*}
\left\{ 
\begin{array}{l}
C_{m+2}=z\left\{ 
\begin{array}{c}
b\left( 3b^{2}-2bd-c^{2}\right)  \\ 
-\left( \alpha +\beta \right) \left[ b\left( Y+c\right) -cd\right] +\alpha
\beta b%
\end{array}%
\right\} , \\ 
C_{m+1}^{1}=-bz\left( d-b\right) \left( d-3b\right) +\left( d-b\right)
^{2}bc\left( Y-c\right) , \\ 
C_{m+1}^{2}=-\left( \alpha +\beta \right) z\left[ \left( d-b\right) \left(
Y+c\right) -bc\right] , \\ 
C_{m+1}^{3}=+\left( \alpha +\beta \right) \left( d\mathbf{-}b\right) \mathbf{%
c}^{2}\left( Y-c\right) +\alpha \beta b\left( b^{2}-c^{2}\right) , \\ 
C_{m}=z\left( d\mathbf{-}b\right) \left\{ b\left( d\mathbf{-}b\right)
+c\left( \alpha +\beta \right) \right\} 
\end{array}%
\right. 
\end{equation*}%
Using formula (\ref{&}), we get%
\begin{equation*}
\resizebox{1.02 \hsize} {!} {$C_{m+1}^{\left( 2\right) }=-\left( \alpha
+\beta \right) \left\{ \begin{array}{l} \left( d\mathbf{-}b\right) \left[
\begin{array}{c} 2cz\cos \theta \\ +\left( c^{2}-b^{2}\right) Y+c\left(
b^{2}-c^{2}\right) \end{array}\right] \\ -bcz\end{array}\right\}$}
\end{equation*}%
Applying the trigonometric identity (\ref{Trig}), the constants become%
\begin{equation*}
\left\{ 
\begin{array}{l}
C_{m+2}=bz\left\{ \left( 3b^{2}-2bd-c^{2}\right) -\left( \alpha +\beta
\right) Y+\alpha \beta \right\} , \\ 
C_{m+1}^{\left( 1\right) }=-bz\left( d-b\right) \left( d-3b\right) +\left(
d-b\right) ^{2}bc\left( Y-c\right) , \\ 
C_{m+1}^{\left( 2\right) }+C_{m+1}^{\left( 3\right) }=b\left\{ \left( \alpha
+\beta \right) \left( d\mathbf{-}b\right) \left( Y-c\right) +\alpha \beta
\left( b^{2}-c^{2}\right) \right\} , \\ 
C_{m}=zb\left( d\mathbf{-}b\right) ^{2}.%
\end{array}%
\right. 
\end{equation*}%
Applying (\ref{&}) to coefficient of $\sin \left(
m+2\right) \theta $ and using the trigonometric identity 
\begin{equation*}
2\cos \theta \sin \left( m+2\right) \theta -\sin \left( m+1\right) \theta
=\sin \left( m+3\right) \theta ,
\end{equation*}%
we get%
\begin{equation*}
\left\{ 
\begin{array}{l}
C_{m+3}=bz^{2}-\left( \alpha +\beta \right) bcz \\ 
C_{m+2}=\left\{ 
\begin{array}{c}
\left( 3b^{2}-2bd-c^{2}+\alpha \beta \right) z \\ 
-\left( \alpha +\beta \right) c\left( 2b^{2}-c^{2}\right) +\left( \alpha
+\beta \right) b^{2}Y%
\end{array}%
\right\} , \\ 
C_{m+1}^{\left( 1\right) }=-bz\left( d-b\right) \left( d-3b\right) +\left(
d-b\right) ^{2}bc\left( Y-c\right) , \\ 
C_{m+1}^{\left( 2\right) }+C_{m+1}^{\left( 3\right) }=\left( \alpha +\beta
\right) \left( d-b\right) b^{2}\left( Y-c\right)  \\ 
+\left( \alpha +\beta \right) bcz+\alpha \beta b\left( b^{2}-c^{2}\right) ,
\\ 
C_{m}=bz\left( d-b\right) ^{2}.%
\end{array}%
\right. 
\end{equation*}%
We eliminate the term $C_{m}$ by applying the trigonometric identity (\ref%
{Trig}), to get%
\begin{equation*}
\left\{ 
\begin{array}{l}
C_{m+3}=bz^{2}-\left( \alpha +\beta \right) bcz \\ 
C_{m+2}=b\left\{ 
\begin{array}{c}
\left( 3b^{2}-2bd-c^{2}-\left( d\mathbf{-}b\right) ^{2}+\alpha \beta \right)
z \\ 
-\left( \alpha +\beta \right) c\left( 2b^{2}-c^{2}\right) +\left( \alpha
+\beta \right) b^{3}Y%
\end{array}%
\right\} , \\ 
C_{m+1}^{\left( 1\right) }=-bz\left[ \left( d\mathbf{-}b\right) \left( d%
\mathbf{-}3b-2\left( d\mathbf{-}b\right) \cos \theta \right) \right]  \\ 
+\left( d-b\right) ^{2}bc\left( Y-c\right) , \\ 
C_{m+1}^{\left( 2\right) }+C_{m+1}^{\left( 3\right) }=\left( \alpha +\beta
\right) \left( d\mathbf{-}b\right) b^{2}\left( Y-c\right)  \\ 
+\left( \alpha +\beta \right) bcz+\alpha \beta b\left( b^{2}-c^{2}\right) ,
\\ 
C_{m}=0.%
\end{array}%
\right. 
\end{equation*}%
Applying (\ref{&}) to the term $C_{m+1}^{\left( 1\right) }$ results%
\begin{equation*}
\left. 
\begin{array}{l}
C_{m+1}^{\left( 1\right) }=-bz\left( d\mathbf{-}b\right) \left( d\mathbf{-}%
3b\right) +\left( d\mathbf{-}b\right) ^{2}bY^{2} \\ 
-\left( d\mathbf{-}b\right) ^{2}b\left( 2b^{2}-c^{2}\right) +\left(
d-b\right) ^{2}bc\left( Y-c\right) .%
\end{array}%
\right. 
\end{equation*}%
Now the question is: under what conditions the coefficients of $\sin \left(
m+3\right) \theta $ and $\sin \left( m+1\right) \theta $ are equal.\newline
Indeed, we obtain this, if we suppose that%
\begin{equation*}
d\mathbf{-}b=c\text{ and }\alpha =\beta =-b,
\end{equation*}%
which coincides with our matrix%
\begin{equation*}
e=\left( w-1\right) ^{2},\ d=w,\ b=-w(w-1),\ \ c=w^{2}.
\end{equation*}%
In this case, we have%
\begin{equation*}
\resizebox{1.02 \hsize} {!} {$\tfrac{\sin \theta \Delta _{m+2}^{\left(
2\right) }\left( \alpha ,\beta ,bd,bd\right) }{z^{m}}=D_{1}\sin \left(
m+3\right) \theta +D_{2}\sin \left( m+2\right) \theta +D_{1}\sin \left(
m+1\right) \theta ,$}
\end{equation*}%
where%
\begin{equation*}
\left\{ 
\begin{array}{l}
D_{1}=z^{2}+2bcz, \\ 
D_{2}=\left( 2b^{2}-2bc-2c^{2}\right) z+2bc\left( 2b^{2}-c^{2}\right)
-2b^{3}Y.%
\end{array}%
\right. 
\end{equation*}%
The application of the trigonometric identity%
\begin{equation*}
\sin (m+1)\theta +\sin \left( m+3\right) \theta =2\cos \theta \sin \left(
m+2\right) \theta ,
\end{equation*}%
and the formula (\ref{&}) together, give the more simplified for the
characteristic polynomial of the matrices $A_{2m}\left( \alpha ,\beta
,e,bd,c,bd\right) $%
\begin{equation*}
\Delta _{m}^{\left( 2\right) }\left( \alpha ,\beta ,bd,bd\right)
=z^{m-1}\left( \left( Y+b\right) ^{2}-\left( b+c\right) ^{2}\right) \frac{%
\sin m\theta }{\sin \theta },
\end{equation*}%
This gives (\ref{Eig. 2}) and ends the proof of the Theorem.
\end{proof}
\section{Explicit formulas for Convergence Rate}
In this section, we compute the explicit expressions of convergence rate for one-dimensional lattice networks. We study the one-dimensional lattice networks for $n=odd$ and $n=even$ cases.
\subsection{For $n$=odd}
Comparing the expressions of (\ref{2}) and (\ref{1}), we observe $%
c=w^{2},\ \ d=w,\ \ b=-w^{2}+w,\ \ \ e-\alpha =e-\beta =1-w\rightarrow
e=\left( w-1\right) ^{2},\ $\newline
$\alpha =\beta =\left( w-1\right) ^{2}-1+w=w^{2}-w=-b$, $d-b=w^{2}=c,$ Since 
$e=\left( w-1\right) ^{2}$ and $Y=e-\lambda $, then%
\begin{equation*}
\lambda _{1,0}=1.
\end{equation*}%
The other eigenvalues are the roots of the algebraic equation%
\begin{equation*}
\lambda ^{2}-2\left[ \left( w-1\right) ^{2}-w^{2}\cos \tfrac{\left(
2k+1\right) \pi }{2n}\right] \lambda +\left( 2w-1\right) ^{2}=0,
\end{equation*}%
$\ \ k=0,1,2,...,\frac{n-3}{2},$ which can be written%
\begin{equation*}
\lambda ^{2}-2\left[ -2w+1+2w^{2}\sin ^{2}\tfrac{\left( 2k+1\right) \pi }{2n}%
\right] \lambda +\left( 2w-1\right) ^{2}=0,
\end{equation*}%
$\ \ k=0,1,2,...,\frac{n-3}{2}.$ The characteristic of the above equation is%
\begin{equation*}
\Delta ^{\prime }=4w^{2}\sin ^{2}\tfrac{\left( 2k+1\right) \pi }{2n}\left[
w^{2}\sin ^{2}\tfrac{\left( 2k+1\right) \pi }{2n}-2w+1\right] .
\end{equation*}%
This gives the expressions of the eigenvalues%
\begin{equation*}
\left. 
\begin{array}{l}
\lambda _{k}=-2w+1+2w^{2}\sin ^{2}\tfrac{\left( 2k+1\right) \pi }{2n} \\ 
\pm 2w\left\vert \sin \tfrac{\left( 2k+1\right) \pi }{2n}\right\vert \sqrt{%
w^{2}\sin ^{2}\tfrac{\left( 2k+1\right) \pi }{2n}-2w+1}.%
\end{array}%
\right. 
\end{equation*}%
The largest eigenvalue is obtained when $\sin ^{2}\frac{\left( 2k+1\right)
\pi }{2n}=1$ which means $k=\frac{n-1}{2}$, i.e.$\ \lambda _{1,0}=1.$
Consequently, the second largest eigenvalue is obtained for $k=\frac{n-3}{2}.
$ That is 
\begin{equation*}
\left. 
\begin{array}{l}
\lambda _{\frac{n-3}{2}}^{+}=-2w+1+2w^{2}\sin ^{2}\tfrac{\left( n-2\right)
\pi }{2n} \\ 
+2w\left\vert \sin \tfrac{\left( n-2\right) \pi }{2n}\right\vert \sqrt{%
w^{2}\sin ^{2}\tfrac{\left( n-2\right) \pi }{2n}-2w+1}.%
\end{array}%
\right. 
\end{equation*}%
Therefore, convergence rate for $n$=odd is expressed as
\begin{equation*}
\left. 
\begin{array}{l}
\lambda _{\frac{n-3}{2}}^{+}=-2w+2w^{2}\sin ^{2}\tfrac{\left( n-2\right)
\pi }{2n} \\ 
+2w\left\vert \sin \tfrac{\left( n-2\right) \pi }{2n}\right\vert \sqrt{%
w^{2}\sin ^{2}\tfrac{\left( n-2\right) \pi }{2n}-2w+1}.%
\end{array}%
\right. 
\end{equation*}%
Convergence rate of average periodic gossip algorithm ($w=\frac{1}{2}$) is expressed as
$R=1-\sin ^{2}\tfrac{\left(
n-2\right) \pi }{2n}$.
\subsection{For $n$=even}
By comparing the (\ref{3}) and (\ref{1.1}), we observe\newline
$c=w^{2}$, $\alpha =\beta =w^{2}-w$, $d=w$, $e=(w-1)^{2}$, $b=-w^{2}+w$.
When $b=-w^{2}+w$ and $c=w^{2}$, we get 
\begin{equation*}
Y_{1,0}=w^{2},\ \ Y_{2,0}=w^{2}-2w,
\end{equation*}%
and%
\begin{equation*}
Y_{i,k}^{2}-2w^{2}Y_{i,k}\cos \tfrac{2k\pi }{n}-2\left( -w^{2}+w\right)
^{2}\left( 1-\cos \tfrac{2k\pi }{n}\right) +w^{4}=0,
\end{equation*}%
$\ \ k=1,2,...,m-1.$ Since $e=\left( w-1\right) ^{2}$ and $Y=e-\lambda $,
then%
\begin{equation*}
\left. 
\begin{array}{c}
\lambda _{1,0}=\left( w-1\right) ^{2}-w^{2}=-2w+1,\ \  \\ 
\lambda _{2,0}=\left( w-1\right) ^{2}-\left( w^{2}-2w\right) =1,%
\end{array}%
\right. 
\end{equation*}%
and the other eigenvalues are the roots of the algebraic equation%
\begin{equation*}
\lambda ^{2}-2\left[ \left( w-1\right) ^{2}-w^{2}\cos \tfrac{2k\pi }{n}%
\right] \lambda +\left( 2w-1\right) ^{2}=0,
\end{equation*}%
$\ \ k=1,2,...,\frac{n-2}{2},$ which can be written%
\begin{equation*}
\lambda ^{2}-2\left[ -2w+1+2w^{2}\sin ^{2}\tfrac{k\pi }{n}\right] \lambda
+\left( 2w-1\right) ^{2}=0.
\end{equation*}%
The characteristic of the above equation is%
\begin{equation*}
\Delta ^{\prime }=4w^{2}\sin ^{2}\tfrac{k\pi }{n}\left[ w^{2}\sin ^{2}\tfrac{%
k\pi }{n}-2w+1\right] .
\end{equation*}%
This gives the expressions of the eigenvalues%
\begin{equation*}
\left. 
\begin{array}{l}
\lambda _{k}=-2w+1+2w^{2}\sin ^{2}\tfrac{k\pi }{n} \\ 
\pm 2w\left\vert \sin \tfrac{k\pi }{n}\right\vert \sqrt{w^{2}\sin ^{2}\tfrac{%
k\pi }{n}-2w+1},%
\end{array}%
\right. 
\end{equation*}%
$k=1,2,...,\frac{n-2}{2}$. The largest eigenvalue is obtained when $\sin ^{2}%
\frac{k\pi }{n}=1$ which means $k=\frac{n}{2}$, i.e. $\ \lambda _{2,0}=1$.
Consequently, we obtain the second largest eigenvalue for $k=\frac{n-2}{2}.$
That is 
\begin{equation*}
\left. 
\begin{array}{l}
\lambda _{\frac{n-2}{2}}^{+}=-2w+1+2w^{2}\sin ^{2}\tfrac{\left( n-2\right)
\pi }{2n} \\ 
+2w\left\vert \sin \tfrac{\left( n-2\right) \pi }{2n}\right\vert \sqrt{%
w^{2}\sin ^{2}\tfrac{\left( n-2\right) \pi }{2n}-2w+1}.%
\end{array}%
\right. 
\end{equation*}%
Therefore, convergence rate for $n$=even is expressed as
\begin{equation*}
\left. 
\begin{array}{l}
R=-2w+2w^{2}\sin ^{2}\tfrac{\left( n-2\right)
\pi }{2n} \\ 
+2w\left\vert \sin \tfrac{\left( n-2\right) \pi }{2n}\right\vert \sqrt{%
w^{2}\sin ^{2}\tfrac{\left( n-2\right) \pi }{2n}-2w+1}.%
\end{array}%
\right. 
\end{equation*}%
Convergence rate of average periodic gossip algorithms ($w=\frac{1}{2}$)  is expressed as $\lambda _{\frac{n-2}{2}}^{+}=1-\sin ^{2}\tfrac{\left(
n-2\right) \pi }{2n}$.\\
\section{Effect of Link Failures on Convergence Rate}
\subsection{For $n$=even}
Comparing the expressions of (\ref{4}) and (\ref{1}), we observe 
$c=\frac{(p-1)^2}{4}$,\ \ $d=\frac{1-p}{2}$,\ \ $b=\frac{1-p^2}{4}$,\ \ \ $e=\frac{(p+1)^2}{4}$, \ \ $\alpha=\beta=\frac{p^2-1}{4}$ and $Y=e-\lambda$, 
then
\begin{equation*}
\lambda _{1,0}=1.
\end{equation*}
The other eigenvalues are the roots of the algebraic equation
\begin{equation}
\resizebox{1.02 \hsize} {!} {$\lambda ^2  + \lambda \left( { - 2e + 2c\cos \frac{{(2k + 1)\pi }}{{2m + 1}}} \right) + e^2  - 2ce\cos \frac{{(2k + 1)\pi }}{{2m + 1}} + 2b^2 \cos \frac{{(2k + 1)\pi }}{{2m + 1}} - 2b^2  + c^2  = 0$}
\end{equation}
Substituting the values of $c$, $b$, and $e$ results in
\begin{equation}
\resizebox{1.02 \hsize} {!} {$\lambda ^2  + \lambda \left( { - \frac{{(p + 1)^2 }}{2} + \frac{{(p - 1)^2 }}{2} - (p - 1)^2 \sin ^2 \frac{{(2k + 1)\pi }}{{4m + 2}}} \right) + p^2  = 0$}
\end{equation}
Then, $\lambda_k$ is expressed as
\begin{equation}
\resizebox{1.02 \hsize} {!} {$\lambda _k  = p + \frac{{(p - 1)^2 }}{2}\sin ^2 \frac{{(2k + 1)\pi }}{{4m + 2}} + \sqrt {\frac{{(p - 1)^4 }}{4}\sin ^4 \frac{{(2k + 1)\pi }}{{4m + 2}} + p(p - 1)^2 \sin ^2 \frac{{(2k + 1)\pi }}{{4m + 2}}}$}
\end{equation}
Second largest eigenvalue is expressed as
\begin{equation}
\resizebox{1.02 \hsize} {!} {$\lambda _{\frac{n-2}{2}}^{+}  = p + \frac{{(p - 1)^2 }}{2}\sin ^2 \frac{{(n - 2)\pi }}{{2n}} + \sqrt {\frac{{(p - 1)^4 }}{4}\sin ^4 \frac{{(n - 2)\pi }}{{2n}} + p(p - 1)^2 \sin ^2 \frac{{(n - 2)\pi }}{{2n}}}$}
\end{equation}
Therefore, convergence rate of the average periodic gossip algorithms when communication links fail with the probability $p$ for even number of nodes is expressed as
\begin{equation}
\resizebox{1.02 \hsize} {!} {$R  = 1-p - \frac{{(p - 1)^2 }}{2}\sin ^2 \frac{{(n - 2)\pi }}{{2n}} - \sqrt {\frac{{(p - 1)^4 }}{4}\sin ^4 \frac{{(n - 2)\pi }}{{2n}} - p(p - 1)^2 \sin ^2 \frac{{(n - 2)\pi }}{{2n}}}$}
\end{equation}
\subsection{For $n$=odd}
Comparing the expressions of (\ref{5}) and (\ref{1.1}), we observe 
$c=\frac{(p-1)^2}{4}$,\ \ $d=\frac{1-p}{2}$,\ \ $b=\frac{1-p^2}{4}$,\ \ \ $e=\frac{(p+1)^2}{4}$, \ \ $\alpha=\beta=\frac{p^2-1}{4}$ and $Y=e-\lambda$, 
then
\begin{equation*}
\lambda _{1,0}=1.
\end{equation*}
The other eigenvalues are the roots of the algebraic equation
\begin{equation}
\resizebox{1.02 \hsize} {!} {$\lambda ^2  + \lambda \left( { - 2e + 2c\cos \frac{{k\pi }}{m}} \right) + e^2  - 2ce\cos \frac{{k\pi }}{m} + 2b^2 \cos \frac{{k\pi }}{m} - 2b^2 \cos \frac{{k\pi }}{m} - 2b^2  + c^2  = 0$}
\end{equation}
Substituting the values of $c$, $b$, and $e$ results in
\begin{equation}
\resizebox{1.02 \hsize} {!} {$\lambda ^2  + \lambda \left( { - \frac{{(p + 1)^2 }}{2} + \frac{{(p - 1)^2 }}{2} - (p - 1)^2 \sin ^2 \frac{{k\pi }}{{n}}} \right) + p^2  = 0$}
\end{equation}
Then, $\lambda_k$ is expressed as
\begin{equation}
\resizebox{1.02 \hsize} {!} {$\lambda _k  = p + \frac{{(p - 1)^2 }}{2}\sin ^2 \frac{{k\pi }}{n} + \sqrt {\frac{{(p - 1)^4 }}{4}\sin ^4 \frac{{k\pi }}{{n}} + p(p - 1)^2 \sin ^2 \frac{{k\pi }}{n}}$}
\end{equation}
Second largest eigenvalue is expressed as
\begin{equation}
\resizebox{1.02 \hsize} {!} {$\lambda _{\frac{n-3}{2}}^{+}  = p + \frac{{(p - 1)^2 }}{2}\sin ^2 \frac{{(n - 2)\pi }}{{2n}} + \sqrt {\frac{{(p - 1)^4 }}{4}\sin ^4 \frac{{(n - 2)\pi }}{{2n}} + p(p - 1)^2 \sin ^2 \frac{{(n - 2)\pi }}{{2n}}}$}
\end{equation}
Therefore, convergence rate of the average periodic gossip algorithms when communication links fail with the probability $p$ for odd number of nodes is expressed as
\begin{equation}
\resizebox{1.02 \hsize} {!} {$R  = 1-p - \frac{{(p - 1)^2 }}{2}\sin ^2 \frac{{(n - 2)\pi }}{{2n}} - \sqrt {\frac{{(p - 1)^4 }}{4}\sin ^4 \frac{{(n - 2)\pi }}{{2n}} - p(p - 1)^2 \sin ^2 \frac{{(n - 2)\pi }}{{2n}}}$}
\end{equation}

\begin{table}[]
\centering
\caption{Convergence Rate of Small Scale WSNs}
\label{my-label}
\begin{tabular}{|l|l|l|ll}
\cline{1-3}
\textbf{Number of Nodes} & \textbf{Convergence Rate} & \textbf{Optimal Gossip Weight} &  &  \\ \cline{1-3}
4                        & 0.8                        & 0.6                 &  &  \\ \cline{1-3}
5                        & 0.6                      & 0.7                &  &  \\ \cline{1-3}
6                        & 0.6                       & 0.7                 &  &  \\ \cline{1-3}
7                       & 0.6                      & 0.7                &  &  \\ \cline{1-3}
8                        & 0.4                     & 0.8                  &  &  \\ \cline{1-3}
9                        & 0.4                       & 0.8                 &  &  \\ \cline{1-3}
10                        & 0.4                        & 0.8                  &  &  \\ \cline{1-3}
11                        & 0.4                      & 0.8                &  &  \\ \cline{1-3}
12                       & 0.4                      & 0.8                 &  &  \\ \cline{1-3}
13                        & 0.3034                        & 0.8                  &  &  \\ \cline{1-3}
14                        & 0.2412                    & 0.8              &  &  \\ \cline{1-3}
15                       & 0.2015                   & 0.8                &  &  \\ \cline{1-3}
16                       & 0.2                    & 0.9                &  &  \\ \cline{1-3}
17                       & 0.2                     & 0.9                 &  &  \\ \cline{1-3}
18                       & 0.2                       & 0.9                 &  &  \\ \cline{1-3}
19                       & 0.2                       & 0.9                  &  &  \\ \cline{1-3}
20                       & 0.2                      & 0.9                 &  &  \\ \cline{1-3}
\end{tabular}
\end{table}
\begin{table}[]
\centering
\caption{Convergence Rate of Large Scale WSNs}
\label{my-label}
\begin{tabular}{|l|l|l|ll}
\cline{1-3}
\textbf{Number of Nodes} & \textbf{Convergence Rate} & \textbf{Optimal Gossip Weight} &  &  \\ \cline{1-3}
100                        & 0.009                       & 0.9                 &  &  \\ \cline{1-3}
200                      & 0.0022                      & 0.9                &  &  \\ \cline{1-3}
300                      & 0.001                     & 0.9                 &  &  \\ \cline{1-3}
400                     & 0.0006                    & 0.9              &  &  \\ \cline{1-3}
500                      & 0.1                   & 0.9                  &  &  \\ \cline{1-3}
600                      & 0.002                       & 0.9                 &  &  \\ \cline{1-3}
700                       & 0.002                      & 0.9                 &  &  \\ \cline{1-3}
800                       & 0.001                     & 0.9              &  &  \\ \cline{1-3}
900                      & 0.001                   & 0.9                &  &  \\ \cline{1-3}
1000                       & 0.0001                       & 0.9                  &  &  \\ \cline{1-3}
\end{tabular}
\end{table}

\section{Numerical Results}

In this section, we present the numerical results. Fig. 2 shows the convergence rate versus number of nodes in one-dimensional lattice networks for average periodic gossip algorithms(\textit{w} =0.5). We have observed that convergence rate reduces exponentially with the increase in number of nodes. In every time step, nodes share information with their direct neighbors  to achieve the global average. Thus, for larger number of nodes, more time steps will be required, thereby leading to slower convergence rates. As shown in Table. I, optimal gossip weights are varying with the number of nodes until \textit{n}=16 and it's value becomes 0.9 from $\textit{n}\ge 16$. Fig. 3 shows the convergence rate versus gossip weights for large-scale networks. We have observed that for large-scale lattice networks, the optimal gossip weight turns out to be 0.9 (see Table. 2). Hence, we can conclude that for any reasonable sized network ($\textit{n}\ge 16$), gossip weight should be 0.9 for achieving faster convergence rates in one dimensional lattice networks. We measure the efficiency of periodic gossip algorithms at $\textit{w}$=0.9 over average periodic gossip algorithms($\textit{w}$=0.5) by using relative error $\textit{RE} = \frac{{R_{0.9}  - R_{0.5} }}{{R_{0.9} }}$. $R_{0.9}$ and $R_{0.5}$ denote the convergence rate for \textit{w}=0.9 and \textit{w}=0.5 respectively. Fig. 5 shows the relative error versus the number of nodes for small-scale networks. Here, we observed  that relative error is increasing with the number of nodes. Fig. 6 shows the relative error versus the number of nodes for large-scale networks. In this case, we observed that relative error is approximately constant (0.89), with any further increase in number of nodes. To study the effect of communication link failures, we plot the Fig. 7. We have observed that, convergence rate is decreasing with the probability of link failures.

\begin{figure}[tbp]
\centering
\includegraphics[width=6.5cm,height=5cm]{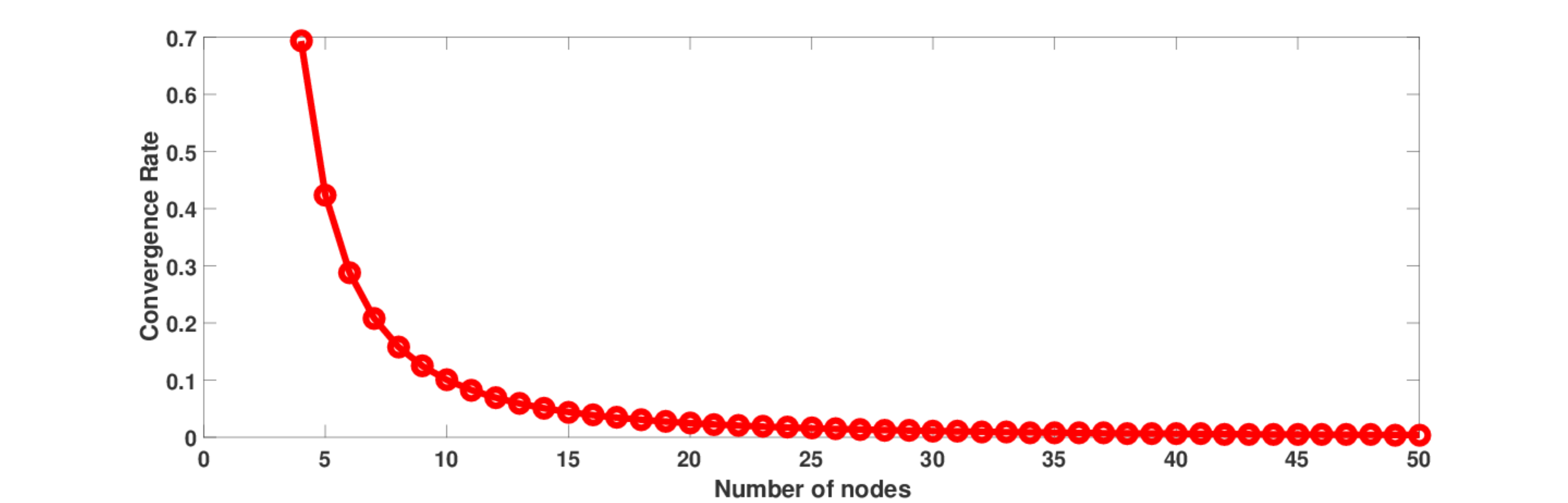}
\caption{Convergence Rate versus Number of nodes for Average Gossip Algorithm ($w$=0.5).}
\label{fig:verticalcell}
\end{figure}

\begin{figure}[tbp]
\centering
\includegraphics[width=7cm,height=5cm]{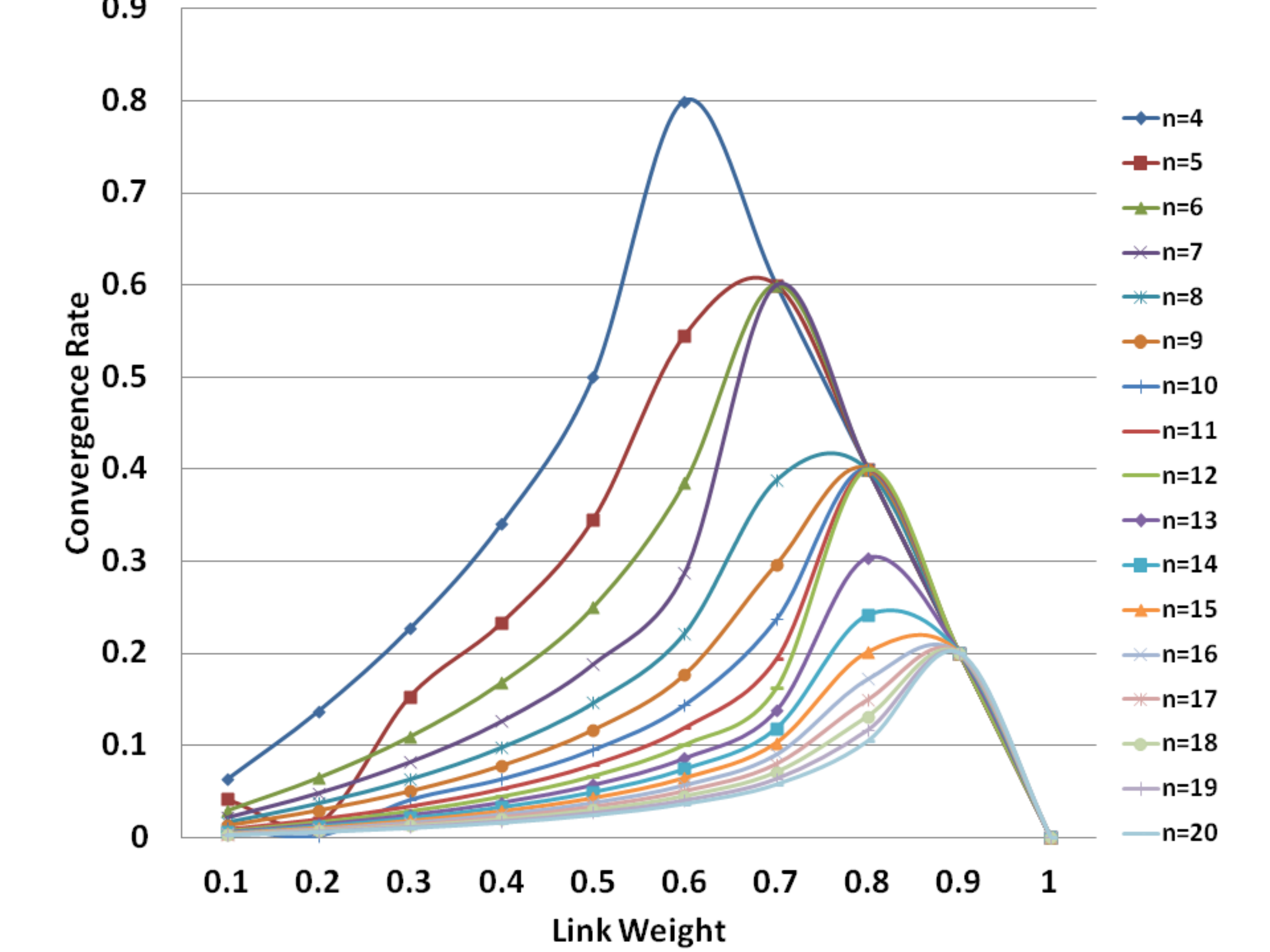}
\caption{Convergence Rate versus Gossip Weight for Small Scale WSNs.}
\label{fig:verticalcell}
\end{figure}

\begin{figure}[tbp]
\centering
\includegraphics[width=7cm,height=5cm]{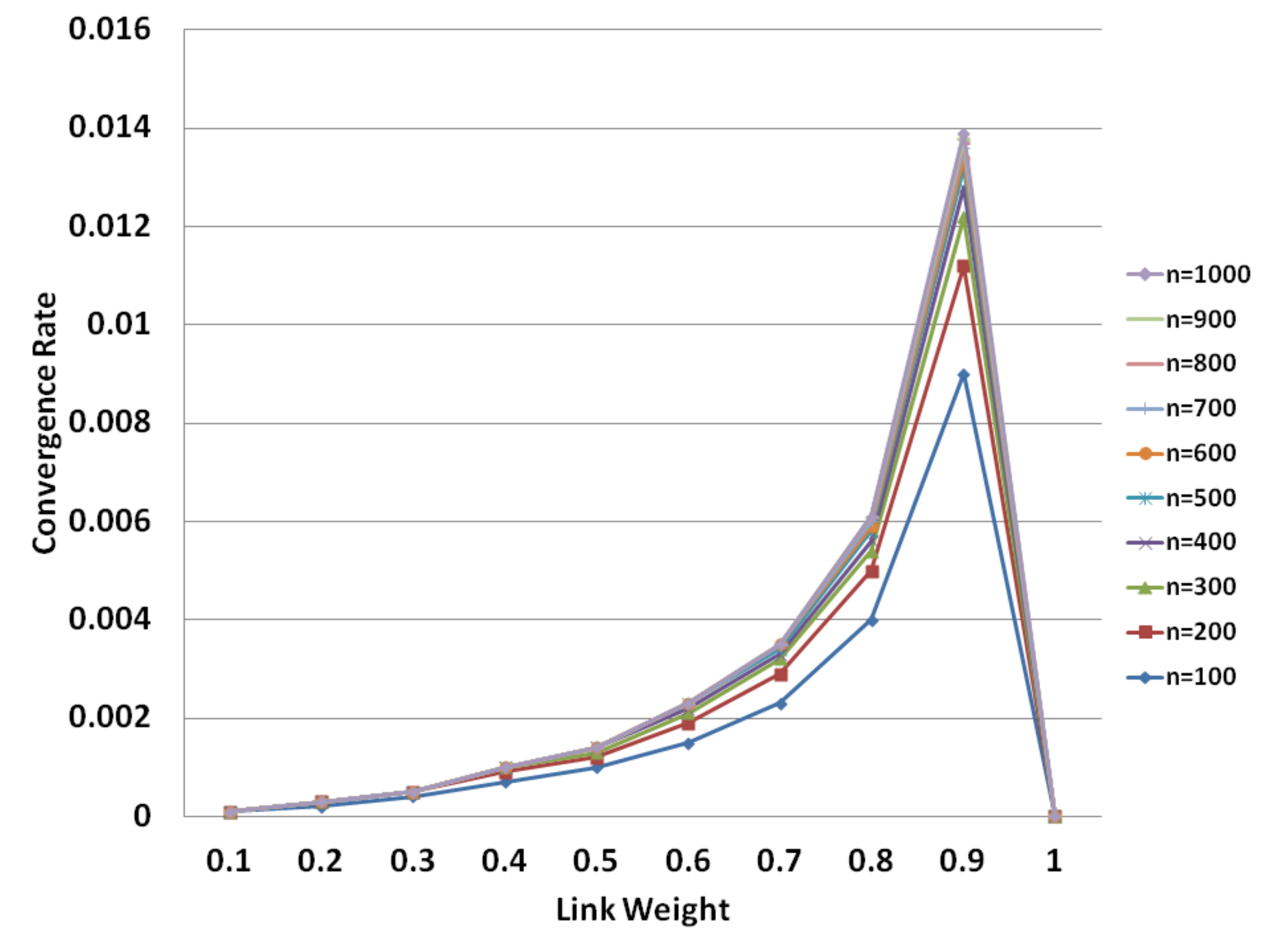}
\caption{Convergence Rate versus Gossip Weight for Large Scale WSNs.}
\label{fig:verticalcell}
\end{figure}

\begin{figure}[tbp]
\centering
\includegraphics[width=7cm,height=5cm]{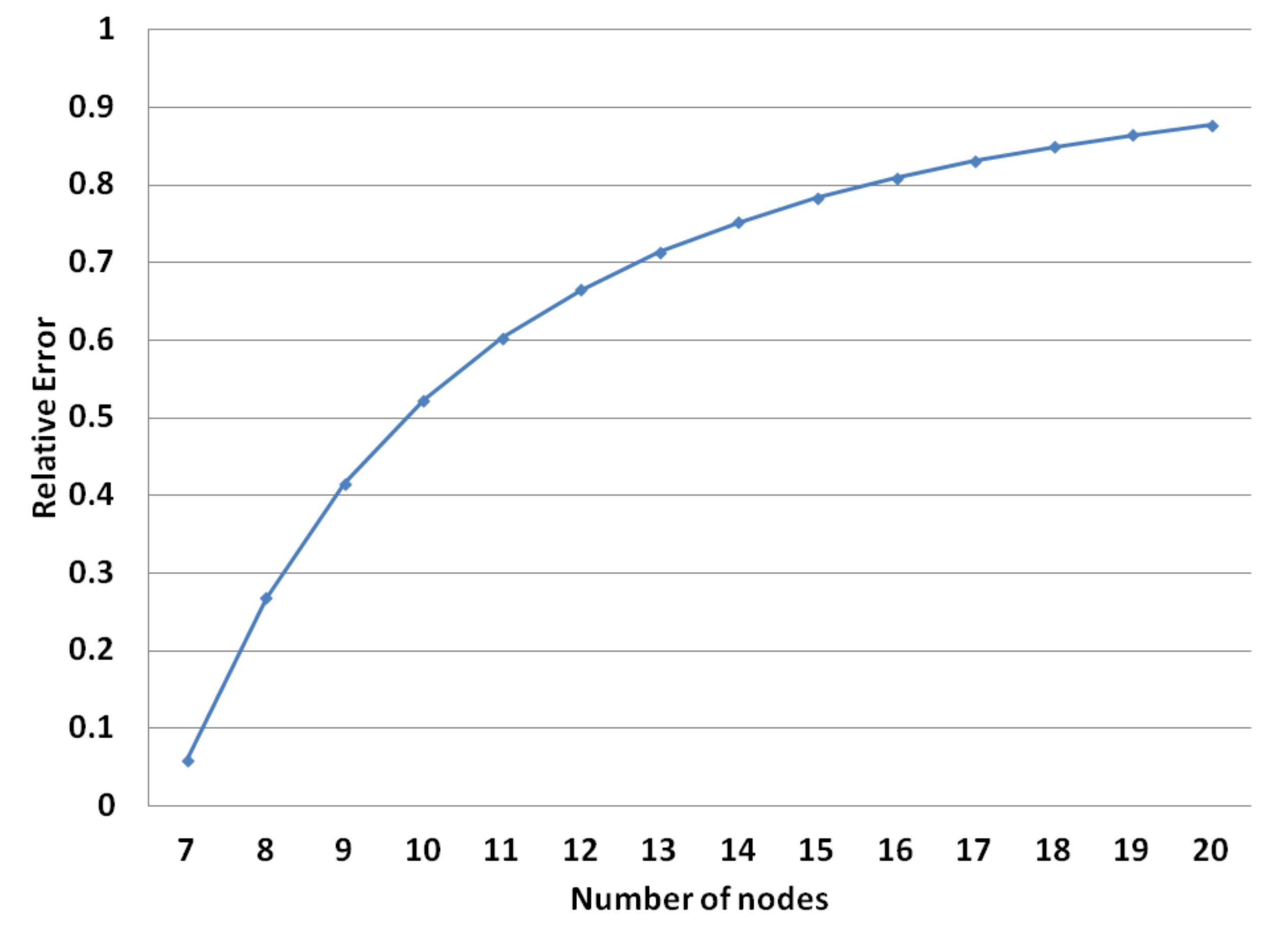}
\caption{Relative Error versus Number of Nodes for Small Scale WSNs.}
\label{fig:verticalcell}
\end{figure}

\begin{figure}[tbp]
\centering
\includegraphics[width=7cm,height=5cm]{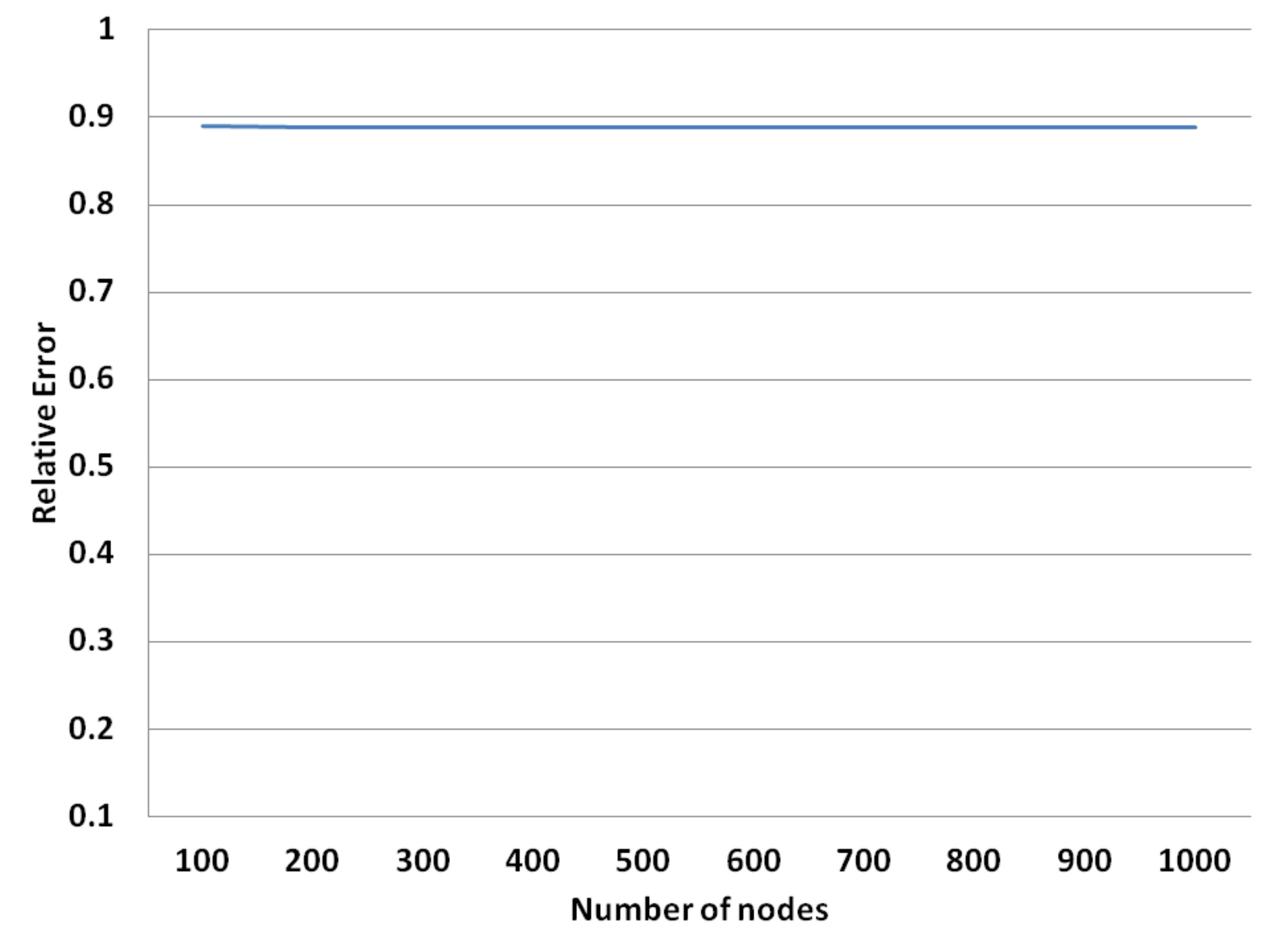}
\caption{Relative Error versus Number of Nodes for Large Scale WSNs.}
\label{fig:verticalcell}
\end{figure}

\begin{figure}[tbp]
\centering
\includegraphics[width=7cm,height=5cm]{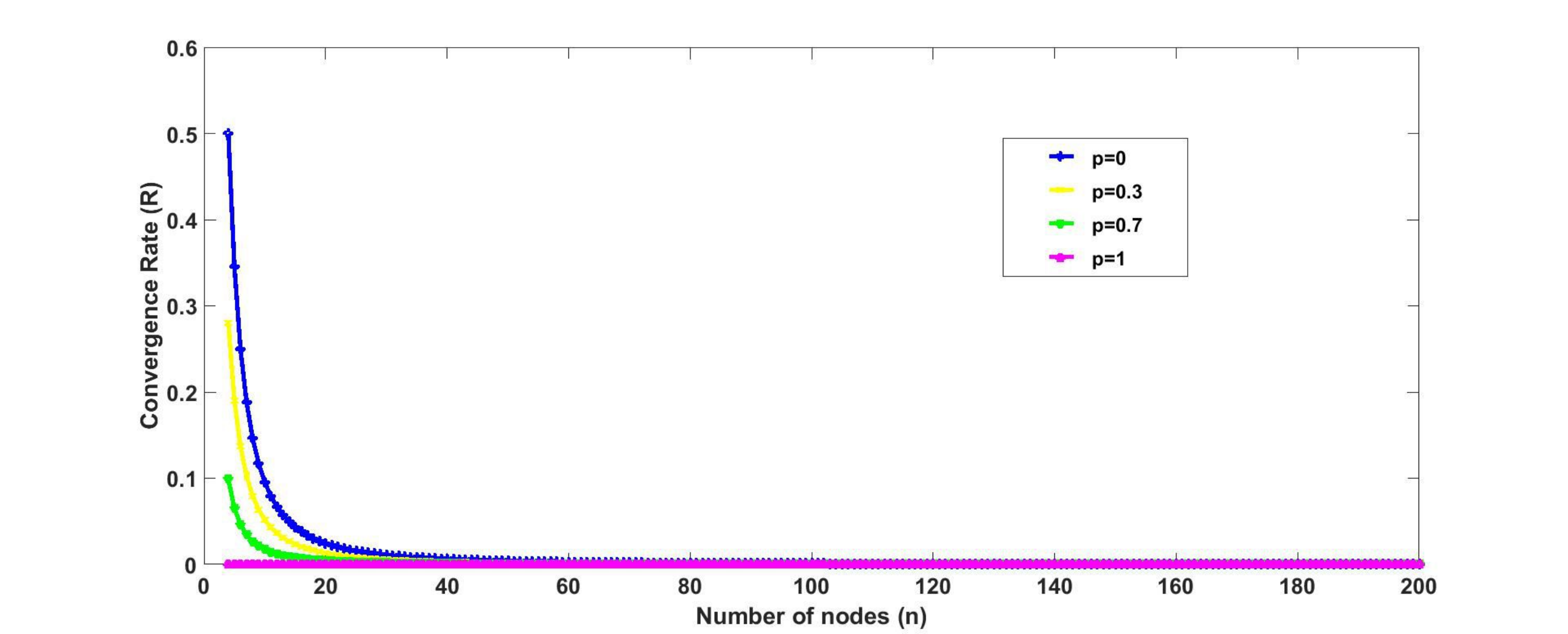}
\caption{Effect of Link Failures on Convergence Rate.}
\label{fig:verticalcell}
\end{figure}

\section{Conclusions}
Estimating the convergence rate of a periodic gossip algorithm is computationally challenging in large-scale networks. This paper derived the explicit formulas of convergence rate for one-dimensional lattice networks. Our work drastically reduces the computational complexity to estimate the convergence rate for large-scale WSNs. We also derived the explicit expressions for convergence rate in terms of gossip weight and number of nodes using linear weight updating approach. Based on our findings, we have observed that there exists an optimum gossip weight which significantly improves the convergence rate for periodic gossip algorithms in small-scale WSNs ($\textit{n}<16$). Our numerical results demonstrate that periodic gossip algorithms achieve faster convergence rate for large-scale networks ($\textit{n} \ge 16$) at $\textit{w}$=0.9 over average periodic gossip algorithms. In this work, we also considered the communication link failures and derived the closed-form expression of convergence rate for average periodic gossip algorithms. Furthermore, our formulation also helped to obtain the eigenvalues of perturbed pentadiagonal matrices. To the best of our knowledge, this is the first paper to derive the explicit formulas of eigenvalues for pentadiagonal matrices.

\section*{Acknowledgment}

The author S. Kouachi thanks the KSA superior education ministry for the
financial support.

\ifCLASSOPTIONcaptionsoff
\newpage \fi

\bibliographystyle{IEEEtran}
\bibliography{tvt}

\begin{thebibliography}{10}
\providecommand{\url}[1]{#1}
\csname url@samestyle\endcsname
\providecommand{\newblock}{\relax}
\providecommand{\bibinfo}[2]{#2}
\providecommand{\BIBentrySTDinterwordspacing}{\spaceskip=0pt\relax}
\providecommand{\BIBentryALTinterwordstretchfactor}{4}
\providecommand{\BIBentryALTinterwordspacing}{\spaceskip=\fontdimen2\font plus
\BIBentryALTinterwordstretchfactor\fontdimen3\font minus
  \fontdimen4\font\relax}
\providecommand{\BIBforeignlanguage}[2]{{%
\expandafter\ifx\csname l@#1\endcsname\relax
\typeout{** WARNING: IEEEtran.bst: No hyphenation pattern has been}%
\typeout{** loaded for the language `#1'. Using the pattern for}%
\typeout{** the default language instead.}%
\else
\language=\csname l@#1\endcsname
\fi
#2}}
\providecommand{\BIBdecl}{\relax}
\BIBdecl

\bibitem{boyd}
S.~Boyd, A.~Ghosh, B.~Prabhakar, and D.~Shah, ``Gossip algorithms: Design,
  analysis and applications,'' in \emph{INFOCOM 2005. 24th Annual Joint
  Conference of the IEEE Computer and Communications Societies. Proceedings
  IEEE}, vol.~3.\hskip 1em plus 0.5em minus 0.4em\relax IEEE, 2005, pp.
  1653--1664.

\bibitem{liu}
J.~Liu, S.~Mou, A.~S. Morse, B.~D. Anderson, and C.~Yu, ``Deterministic
  gossiping,'' \emph{Proceedings of the IEEE}, vol.~99, no.~9, pp. 1505--1524,
  2011.

\bibitem{falsone}
A.~Falsone, K.~Margellos, S.~Garatti, and M.~Prandini, ``Finite time
  distributed averaging over gossip-constrained ring networks,'' \emph{IEEE
  Transactions on Control of Network Systems}, 2017.

\bibitem{mou}
S.~Mou, A.~S. Morse, and B.~D. Anderson, ``Convergence rate of optimal periodic
  gossiping on ring graphs,'' in \emph{Decision and Control (CDC), 2015 IEEE
  54th Annual Conference on}.\hskip 1em plus 0.5em minus 0.4em\relax IEEE,
  2015, pp. 6785--6790.

\bibitem{anderson}
B.~D. Anderson, C.~Yu, and A.~S. Morse, ``Convergence of periodic gossiping
  algorithms,'' in \emph{Perspectives in mathematical system theory, control,
  and signal processing}.\hskip 1em plus 0.5em minus 0.4em\relax Springer,
  2010, pp. 127--138.

\bibitem{he}
F.~He, A.~S. Morse, J.~Liu, and S.~Mou, ``Periodic gossiping,'' \emph{IFAC
  Proceedings Volumes}, vol.~44, no.~1, pp. 8718--8723, 2011.

\bibitem{zanaj}
E.~Zanaj, M.~Baldi, and F.~Chiaraluce, ``Efficiency of the gossip algorithm for
  wireless sensor networks,'' in \emph{Software, Telecommunications and
  Computer Networks, 2007. SoftCOM 2007. 15th International Conference
  on}.\hskip 1em plus 0.5em minus 0.4em\relax IEEE, 2007, pp. 1--5.

\bibitem{dimakis}
A.~G. Dimakis, S.~Kar, J.~M. Moura, M.~G. Rabbat, and A.~Scaglione, ``Gossip
  algorithms for distributed signal processing,'' \emph{Proceedings of the
  IEEE}, vol.~98, no.~11, pp. 1847--1864, 2010.

\bibitem{lattice}
G.~Barrenetxea, B.~Berefull-Lozano, and M.~Vetterli, ``Lattice networks:
  capacity limits, optimal routing, and queueing behavior,'' \emph{IEEE/ACM
  Transactions on Networking (TON)}, vol.~14, no.~3, pp. 492--505, 2006.

\bibitem{el2}
A.~El-Hoiydi and J.-D. Decotignie, ``Wisemac: An ultra low power mac protocol
  for multi-hop wireless sensor networks,'' in \emph{International symposium on
  algorithms and experiments for sensor systems, wireless networks and
  distributed robotics}.\hskip 1em plus 0.5em minus 0.4em\relax Springer, 2004,
  pp. 18--31.

\bibitem{spirakis}
P.~G. Spirakis, ``Algorithmic and foundational aspects of sensor systems,'' in
  \emph{International Symposium on Algorithms and Experiments for Sensor
  Systems, Wireless Networks and Distributed Robotics}.\hskip 1em plus 0.5em
  minus 0.4em\relax Springer, 2004, pp. 3--8.

\bibitem{li}
J.~Li, C.~Blake, D.~S. De~Couto, H.~I. Lee, and R.~Morris, ``Capacity of ad hoc
  wireless networks,'' in \emph{Proceedings of the 7th annual international
  conference on Mobile computing and networking}.\hskip 1em plus 0.5em minus
  0.4em\relax ACM, 2001, pp. 61--69.

\bibitem{hekmat}
R.~Hekmat and P.~Van~Mieghem, ``Interference in wireless multi-hop ad-hoc
  networks and its effect on network capacity,'' \emph{Wireless Networks},
  vol.~10, no.~4, pp. 389--399, 2004.

\bibitem{dousse}
O.~Dousse, F.~Baccelli, and P.~Thiran, ``Impact of interferences on
  connectivity in ad hoc networks,'' \emph{IEEE/ACM Transactions on Networking
  (TON)}, vol.~13, no.~2, pp. 425--436, 2005.

\bibitem{wireless}
G.~J. Pottie and W.~J. Kaiser, ``Wireless integrated network sensors,''
  \emph{Communications of the ACM}, vol.~43, no.~5, pp. 51--58, 2000.

\bibitem{ravi}
R.~P. Agarwal, \emph{Difference equations and inequalities: theory, methods,
  and applications}.\hskip 1em plus 0.5em minus 0.4em\relax Marcel Dekker,
  1992.

\bibitem{capizzano}
S.~S. Capizzano, ``Generalized locally toeplitz sequences: spectral analysis
  and applications to discretized partial differential equations,''
  \emph{Linear Algebra and its Applications}, vol. 366, pp. 371--402, 2003.

\bibitem{hadj}
A.~D.~A. Hadj and M.~Elouafi, ``On the characteristic polynomial, eigenvectors
  and determinant of a pentadiagonal matrix,'' \emph{Applied Mathematics and
  Computation}, vol. 198, no.~2, pp. 634--642, 2008.

\bibitem{salkuyeh}
D.~K. Salkuyeh, ``Comments on “a note on a three-term recurrence for a
  tridiagonal matrix”,'' \emph{Applied mathematics and computation}, vol.
  176, no.~2, pp. 442--444, 2006.

\bibitem{kilic}
E.~Kilic, ``On a constant-diagonals matrix,'' \emph{Applied Mathematics and
  Computation}, vol. 204, no.~1, pp. 184--190, 2008.

\bibitem{mangoubi}
O.~Mangoubi, S.~Mou, J.~Liu, and A.~S. Morse, ``On periodic gossiping with an
  optimal weight.''

\end{thebibliography}


\end{document}